\RequirePackage{fix-cm}
\documentclass[11pt]{article}

\usepackage{setspace}
 \usepackage{tikz}
 \usepackage{pgfplots}
 \pgfplotsset{compat=1.18}
 \usepackage{subcaption}

\usepackage[T1]{fontenc}
\usepackage[latin9]{inputenc}
\usepackage{amsmath,amssymb,amsthm,mathtools}
\usepackage{enumitem}
\usepackage{bbm}
\usepackage{hyphenat}
\usepackage{hyperref}
\usepackage[authoryear]{natbib}
\makeatletter

\usepackage[a4paper]{geometry}
\PassOptionsToPackage{normalem}{ulem}
\usepackage{ulem}
\usepackage{microtype}

\numberwithin{equation}{section}
\theoremstyle{plain}

\numberwithin{equation}{section}

\theoremstyle{plain}
\newtheorem{theorem}{Theorem}[section]
\newtheorem{proposition}[theorem]{Proposition}
\newtheorem{lemma}[theorem]{Lemma}
\newtheorem{corollary}[theorem]{Corollary}

\theoremstyle{definition}

\theoremstyle{remark}

\theoremstyle{definition}
\newtheorem{example}{Example}[section]

\newcommand{\R}{\mathbb{R}}
\newcommand{\E}{\mathbb{E}}
\newcommand{\Prob}{\mathbb{P}}
\newcommand{\cav}{\operatorname{cav}}
\newcommand{\supp}{\operatorname{supp}}
\newcommand{\1}{\mathbf{1}}

\newcommand{\join}{\vee}
\newcommand{\meet}{\wedge}
\newcommand{\SSO}{\succeq_{\mathrm{SSO}}}
\newcommand{\cx}{\succeq_{cx}}

\title{\textbf{Bayesian Persuasion under Bias Management}}
\author{Kemal Ozbek\thanks{Department of Economics, University of Southampton, University Road, Southampton, S017 1BJ, United Kingdom. Email: mkemalozbek@gmail.com}}
\date{\today}

\begin{document}
\maketitle

\begin{abstract}
A principal delegates choice to an agent whose decision depends on both beliefs and tastes.
The principal can steer the delegated decision using two costly instruments:
(i) an information policy that determines a Bayes--plausible distribution of posteriors,
and (ii) a bias-management policy that shifts the agent's effective taste.
We study a binary-state, two-action, convex hull of two benchmark tastes specialization with posterior-separable information costs.
The analysis admits an inner--outer decomposition:
optimal bias management is bang--bang (either no intervention or the minimal intervention needed to flip the agent's action),
while the optimal information policy is characterized by concavification of an endogenous posterior value function that already incorporates optimal management and information costs.
This structure clarifies how information acquisition and bias management interact; they can be complements, substitutes, or both depending on the primitives of the model.
Information changes which posteriors are realized and hence where management is used;
management reshapes the curvature and kinks of the posterior value function and hence the marginal value of information.
The model delivers regime classifications for pooling vs. informativeness and for management at different posteriors within informative signals,
and highlights how comparative statics can be monotone or non-monotone depending on how concavification contact points move with costs.
\end{abstract}

\bigskip
\noindent\textbf{Keywords:} delegation, information design, Bayesian persuasion, bias management, rational inattention, concavification.\\
\textbf{JEL codes:} D82, D83, D86.

\section{Introduction}

Delegation is a central organizing principle in economics: public agencies delegate vendor selection and contract administration to procurement officers; boards delegate corporate capital allocation to managers; retailers delegate pricing and consumer finance offers to frontline sales teams; regulators delegate compliance and remediation decisions to firms; and banks delegate credit underwriting to loan officers and scoring teams while trying to steer those choices toward broader social or organizational objectives.

Delegation is valuable because agents often hold local knowledge and can react to contingencies,
but it creates a classic control problem: the agent's decision rule reflects not only information but also \emph{bias}---a divergence between the agent's objective and the principal's. Two influential literatures isolate different steering channels. Delegation theory studies how a principal steers a biased agent by restricting discretion---committing to a set of allowable actions from which the agent later chooses after observing information. Separately, Bayesian persuasion and information design study how a sender influences a receiver by committing to an information structure that shapes posterior beliefs, with optimal policies characterized via concavification.

In many real environments, principals can also engage in a third activity:
\emph{bias management}---training, culture, screening, internal controls, governance procedures, compliance capacity,
and other interventions that change how the agent maps beliefs into actions.
Bias management is especially relevant when transfers are limited or non-contractible, or when bias arises from organizational culture, bounded rationality, or mission preferences.
This motivates a joint design question:
How should a principal optimally combine information acquisition and bias management to steer delegated choice?
How do these two costly activities interact---are they complements or substitutes, and can their relationship be non-monotone?

This paper proposes a tractable joint framework that integrates persuasion-style information choice with a reduced-form technology for bias management.
The principal chooses at a cost (i) a Bayes--plausible information policy over posteriors and (ii) after each posterior is realized,
a management action that shifts the agent's effective taste within the convex hull of two benchmark tastes.
The convex-hull specification captures \emph{partial alignment}: the principal cannot rewrite the agent's objective, but can move it systematically. This structure makes the interdependence between the two activities transparent:
information determines which posteriors occur (and thus where management is used),
while management reshapes the curvature and kinks of the posterior value function (and thus the marginal value of information).
As a result, information and bias management can be complements or substitutes,
and comparative statics can exhibit jumps when concavification contact points change.

\subsection{A joint design problem}
\label{subsec:expanded_discussion}

We specialize to the binary-state $S=\{H,L\}$, two-action $\{f,g\}$, convex-hull management $q\in[0,1]$ case that yields sharp results. The central mechanism in the model is an inner--outer decomposition that separates \emph{local} (posterior-by-posterior) bias management from \emph{global} (Bayes--plausible) information design. This decomposition clarifies which forces operate at which stage of the principal's problem and why information acquisition and bias management are jointly determined. The inner problem delivers \emph{bang--bang} management: at each posterior, the principal either does nothing or applies the minimal management needed to induce the preferred action.
The outer problem remains a persuasion-style concavification, but applied to an endogenous posterior value function that already incorporates the management option and nets out posterior-separable information costs.

At a realized posterior \(p\), the principal's management decision affects only one object: whether the agent's best response crosses the action threshold (from choosing \(f\) to choosing \(g\)).
Because the agent's choice is binary and the management technology shifts the effective taste within a convex hull, the principal faces a discrete tradeoff:
either do not intervene and accept the agent's default action, or intervene just enough to induce the desired action.
Any ``over-management'' beyond the point at which the agent is induced to choose \(g\) is strictly wasteful, because it does not change the action but increases cost.
This yields the bang--bang structure: \(q^*(p;k_V)\in\{0,q_{\min}(p)\}\).
Economically, this captures a common feature of governance and compliance interventions:
they are deployed not to fine-tune behavior, but to clear decision thresholds (e.g., to ensure a project is rejected unless evidence is strong, or to ensure compliance once indicators exceed a trigger).

Once the optimal management rule is substituted in, the principal's outer problem becomes the choice of a Bayes--plausible distribution $\tau \in \Delta([0,1])$ of posteriors to maximize an \emph{endogenous} posterior value function net of information costs.
In classical Bayesian persuasion, the shape of the sender's value function is pinned down by primitives and the receiver's best response.
Here, the value of a posterior includes an option value of management.
As such, management reshapes the curvature of the posterior value function: it can create or remove kinks at posteriors where the management decision switches between \(0\) and \(q_{\min}(p)\).
Because the concavification geometry depends on these kinks, the optimal information structure can change sharply when parameters vary.

A distinctive implication of the joint design problem is that informativeness need not move monotonically with the management cost \(k_V\).
Holding the posterior fixed, increasing \(k_V\) weakly reduces the value of the management option, which tends to reduce the gains from splitting beliefs.
However, changing \(k_V\) also shifts the region where management is used and can introduce or remove kinks in the posterior value function.
These shape changes can alter the supporting hyperplane at the prior and hence the concavification contact points.
As a result, the optimal pair of posteriors \((p_-,p_+)\) can move discontinuously, producing jumps in informativeness even when the underlying primitives vary smoothly.
In economic terms, a small change in the feasibility or cost of alignment can cause the principal to retarget information acquisition toward a different set of posteriors where management is most effective, thereby increasing or decreasing informativeness depending on the direction of the retargeting.

Information and management become \emph{substitutes} when management is sufficiently cheap and robust that it can correct behavior across a wide range of posteriors.
In that case, the principal can ``buy'' correct decisions directly via management, reducing the incremental value of generating finely dispersed posteriors.
Coarser information---including pooling at the prior---can become optimal because additional informativeness only marginally improves outcomes once behavior is already aligned.
This logic parallels settings in which strong governance or strict protocols reduce the need for extensive monitoring and reporting, and conversely, where detailed reporting is most useful precisely when governance is weak.

Information and management become \emph{complements} when management makes information more actionable.
For example, if the agent's default action is often misaligned in an intermediate range of beliefs, then producing posteriors concentrated in that range is of limited value unless the principal can also induce the correct action there.
In such cases, investing in management raises the marginal payoff of informative signals: it increases the difference in payoffs between high and low posteriors (by ensuring that high-confidence posteriors translate into the principal-preferred action), thereby increasing the concavification gap at the prior and making informativeness more attractive.
In organizational terms, improving alignment (through training, culture, governance) can make data and analytics more valuable because decisions respond appropriately to the information.

As such, the framework suggests that policy evaluations of steering tools should account not only for \emph{substitution} across levers but also for \emph{complementarity} between them. Restrictions on choice architecture (limits on nudges, defaults, or imposed frictions) may shift optimal design toward more intensive information provision, but they can also change \emph{where} posteriors concentrate, thereby altering the marginal value of governance or incentive reforms that shape decision rules at those beliefs. Likewise, stronger disclosure mandates may reduce the returns to costly managerial or regulatory oversight when information alone pushes beliefs into regions where choices are aligned, yet the same mandates may \emph{increase} the returns to oversight when they generate more frequent ``knife-edge'' or intermediate posteriors where alignment is pivotal and management determines which action is taken. More generally, steering is best viewed as a portfolio choice across belief-based and rule-based interventions: depending on costs and on how information reallocates probability mass across posteriors, the optimal portfolio can feature substitution (one lever replacing the other) or complementarity (information making management more valuable, or management increasing the payoff to information).

\subsection{Related literature}

Bias management can be viewed as a reduced form for the principal's ability to shape an agent's incentives and internal governance when decisions are delegated.
In classic delegation models, the principal trades off \emph{discretion} and \emph{control} by restricting the agent's action set, anticipating that an informed but biased agent will choose actions that reflect his private objectives (e.g., \cite{holmstrom1977incentives, alonsoMatouschek2008,amadorBagwell2013}).
Our setting keeps that delegation logic---an agent with his own ``taste'' chooses from a menu---but adds two margins that are typically held fixed:
the principal can make the agent \emph{better informed} (choosing an information policy $\tau$) as in Bayesian persuasion and can also \emph{reduce the effective bias} through costly governance or incentive interventions (choosing a management policy $q$) which is novel to our model.
In this sense, the framework endogenizes both the agent's information and the severity of his bias, allowing us to study when the principal substitutes between investing in information and investing in organizational controls, and how these levers interact through which posteriors lead the delegated agent to take the principal-preferred action.

\citet{kamenicaGentzkow2011} show that a committed sender's problem can be written as choosing a
Bayes--plausible distribution of posteriors and, in the binary-state case, solved by concavifying a
reduced-form posterior payoff. We keep this outer geometry---$\tau$ is still Bayes--plausible over
$p$---but depart from standard persuasion because the posterior payoff is not primitive: at each
posterior the principal also chooses a costly bias-management/implementation instrument (here
$q\in[0,1]$ in the convex-hull specialization) that reshapes the agent's decision rule and thus the
value $\phi(p)$ being concavified. \cite{kolotilinZapechelynyuk2025} show that, under standard
assumptions, delegation and persuasion are equivalent; because $\phi$ is endogenously determined
here, our two-layer problem fits neither literature directly. Instead it sits between Bayesian
persuasion and broader information-design frameworks (\citet{bergemannMorris2019}): in the
obedience/Bayes-correlated formulation the principal chooses an information device subject to
incentive constraints, while in our setting management shifts the effective constraint set by moving the agent's underlying ``type'' (taste), changing which actions are incentive compatible at a given posterior.

We model information acquisition costs using posterior-separable and convex penalties, a standard reduced-form way to capture limited attention and costly information processing in rational inattention and related frameworks (\cite{sims2003,caplinDean2015,matejkaMckay2015}).
In our binary-state specialization, this takes the form $c_P(\tau)=k_P\int \kappa(p)\,\tau(dp)$ with convex $\kappa$, so the outer problem remains a concavification of a net posterior payoff.
Relative to the rational inattention literature, the novelty is not the cost functional but the \emph{endogeneity} of the posterior payoff being concavified:
at each posterior the principal can also choose a costly bias-management action ($q\in[0,1]$) that reshapes the agent's response to beliefs.
This yields a sharp interaction: the marginal value of information is jointly determined by $k_P$ and $k_V$.
When bias management is expensive (high $k_V$), the principal may optimally demand \emph{decisive} signals---those that move posteriors into regions where tastes agree and no intervention is needed---whereas when bias management is cheap, the principal can rely on local steering at intermediate posteriors and may pool more under the same information cost.

A large literature on nudges, defaults, and framing emphasizes that policy and platform interventions can operate either by changing agents' \emph{beliefs} (an information channel) or by changing their \emph{decision rule holding beliefs fixed} (a behavioral or ``preference'' channel), as in the influential synthesis of \citet{thalerSunstein2008}.
Related work on shrouded attributes and obfuscation highlights how market institutions can strategically reduce the informativeness of the environment, thereby shaping behavior through the information channel (\cite{gabaixLaibson2006}).
Empirical surveys of persuasion and media effects similarly distinguish belief-based from preference-based mechanisms and document that both can matter in practice (\cite{dellaVignaGentzkow2010}).
Our framework provides a tractable unifying representation of these two channels in a delegation setting:
the principal chooses an information policy $\tau$ that determines the distribution of posteriors, and also chooses a costly bias-management instrument $q(.)$ that shifts the agent's cutoff rule at a given posterior.
In this reduced form, nudges and defaults correspond to interventions that move the posterior-to-action mapping, while informational disclosures or suppression correspond to changes in $\tau$; the distinctive prediction is that the optimal use of one lever depends on the cost and effectiveness of the other.

Our menu-of-acts formulation connects the analysis to the literature on flexibility, commitment, and endogenous choice frictions.
In the classic ``desire for flexibility'' account of \citet{Kreps1979}, a decision maker values larger opportunity sets because she anticipates future taste uncertainty; \citet{DekelLipmanRustichini2001} provide axioms and a representation in which menu preferences are generated by a subjective state space of future utilities.
We use menus in a different role: the menu $A$ is the \emph{delegation instrument} chosen by the principal, and taste variation reflects heterogeneity or bias on the part of the delegated agent.
A distinctive feature of our framework is that the principal can partially \emph{design} the agent's effective taste---via a costly bias-management technology (captured by $q(.)$)---so that the trade-off between flexibility and control interacts with investments in information.
This perspective is closely related to work on \emph{costly information acquisition} within menu-choice environments (\cite{dDMO17}), where a decision maker endogenously chooses how much to learn before selecting from a menu, and to work on \emph{costly self-regulation} (\cite{MihmOzbek2018}), where costly effort reshapes the decision rule applied to a given menu.
Recent axiomatization formalizes sequential variants of these interactions: \cite{HOT25} propose and characterize a ``sequential persuasion'' model in which a principal chooses information policies and subsequent constraints in a staged fashion, building on the menu-choice techniques developed in the flexibility, commitment, and costly choice-friction literatures.
Our two-layer problem can be viewed as a tractable analogue of these themes: information design determines the distribution of posteriors faced at the choice stage, while bias management determines how the delegated chooser maps a posterior into an action within the given menu.

The rest of the paper proceeds as follows.
Section~\ref{sec:model} presents the framework: primitives, timing, costs, and solution concept; and also interprets the primitives in the context of motivating examples.
Section~\ref{sec:analysis} provides the analysis: we characterize optimal management at each posterior,
solve the information problem via concavification, analyze monotone informativeness in the two-posterior domain, summarize regime structure and trajectories across cost parameters, and discuss regulatory implications of the two--layer model. Section~\ref{sec:conclusion} concludes. Numerical examples and proofs are given in an Appendix.

\section{Framework}\label{sec:model}

We present primitives and then specialize to the binary-state, two-action, convex-hull management case that yields sharp results.

Let \(X\) be a finite set of outcomes and \(\Delta(X)\) lotteries.
Let \(S\) be a finite set of states and \(P=\Delta(S)\) beliefs.
Let \(F\) be the set of acts \(f:S\to\Delta(X)\).
A menu \(A\subset F\) is a nonempty closed set of acts.
The principal's taste is \(u\in \mathbb{R}^{|X|}\).
The agent's taste is \(v\in \mathbb{R}^{|X|}\).
Given menu \(A\) and belief \(p\), the agent selects $\arg\max_{f\in m_{v,p}(A)}\, u(f)\cdot p$ where $m_{v,p}(A) = \arg\max_{f\in A} v(f)\cdot p.$ As such, the agent chooses a best act within the set of available options and breaks any possible ties in favor of the principal.

We specialize to \(S=\{H,L\}\), identify beliefs with \(p=\Prob(H)\in[0,1]\), and fix prior \(p_0\in(0,1)\). Thus, with slight abuse of notation, we use $p$ both for a belief and the probability weight the belief $p$ assigns to the state $H$.
An information policy is \(\tau\in\Delta([0,1])\) satisfying Bayes plausibility:
\(
\int_0^1 p\,\tau(dp)=p_0.
\) We denote the set of Bayes plausible information policies by $\Delta_0([0,1])$. Let the menu be \(A=\{f,g\}\).
Define the principal's posterior gain from implementing \(g\) rather than \(f\) by
\(
\Delta_u(p):=u(g)\cdot p-u(f)\cdot p.
\)
We assume that \(\Delta_u(\cdot)\) is increasing on \([0,1]\).

The agent's effective taste is chosen within the convex hull of two benchmark tastes \(v_L\) and \(v_H\) such that \(v_q=(1-q)v_H+qv_L\) for all \(q \in [0,1].\)
We assume that there exist \(0<\pi_L<\pi_H<1\) and a continuous strictly decreasing function \(\pi:[0,1]\to[\pi_L,\pi_H]\) such that an agent with taste \(v_q\) chooses \(g\) at belief \(p\) iff \(p\ge \pi(q)\).
Moreover \(\pi(0)=\pi_H\) and \(\pi(1)=\pi_L\). Define the minimal management needed to induce \(g\) at posterior \(p\in[\pi_L,\pi_H]\) as
\(
q_{\min}(p):=\inf\{q\in[0,1]:\ p\ge \pi(q)\}.
\)
Then \(q_{\min}(p)=1\) for \(p= \pi_L\), \(q_{\min}(p)=0\) for \(p= \pi_H\),
and \(q_{\min}(p)\in(0,1)\) for \(p\in(\pi_L,\pi_H)\). Moreover \(q_{\min}(\cdot)\) is decreasing.

Information cost is posterior-separable:
\(
c_P(\tau)=k_P\int_0^1 \kappa(p)\,\tau(dp),
\)
where \(k_P\ge 0\) and \(\kappa(.)\) is continuous and convex. Management cost is convex:
\(
c_V(q)=k_V C(q),
\)
where \(k_V\ge 0\) and \(C:[0,1]\to\R_+\) is increasing, l.s.c., convex, and \(C(0)=0\). Given an information-management policy $(\tau,q(\cdot))$, the expected payoff is
\[
\int^1_{0}[\,\max_{h\in m_{v_{q(p)},p}(A)}u(h).p\;-k_V\,C(q(p))\,]\,\tau(dp)\;-\,k_P\,\int^1_{0}\,\kappa(p)\,\tau(dp).
\]

\emph{Timing.}
(i) The principal commits to an information policy \(\tau\) and an after-posterior management rule \(q(\cdot)\);
(ii) A posterior \(p\) is drawn according to \(\tau\);
(iii) The principal chooses \(q(p)\) and pays \(k_V\,C(q(p))\);
(iv) The agent observes \(p\) and chooses \(h\in\{f,g\}\) to maximize \(v_{q(p)}(h)\cdot p\);
(v) The principal receives \(u(h)\cdot p\) from the chosen act, minus costs.

\emph{Optimal policy.} An optimal policy $(\tau^*,q^*(\cdot))$ maximizes expected payoff subject to Bayes plausibility and the agent's best response.

\subsection{Information before management vs management before information}\label{subsec:timing}

Our baseline model assumes that the principal designs an information policy and then, after
each posterior is realized, chooses a bias-management action. An
alternative timing reverses these choices: the principal commits to a single management
choice ex ante and subsequently designs/acquires information. We compare below the
principal's payoffs under the two timings and shows that the baseline timing 
dominates.

In the convex-hull specialization, bias management is a choice of intensity $q\in[0,1]$ that
selects the agent's taste $v_q=qv_L+(1-q)v_H$ (equivalently, a cutoff $\pi(q)$). Let
$
B(p,q):=\max_{h\in m_{v_q,p}(A)} u(h)\cdot p
$
denote the principal's payoff at posterior $p$ when management level $q$ is used, and let
the management cost be $c_V(q)$. The information policy is a Bayes--plausible distribution
$\tau$ over posteriors with cost $c_P(\tau)$.

\smallskip
The two-layer problem corresponding to \emph{baseline timing} is 
\[
U^{\mathrm{bas}}
=\max_{\tau}\left\{\int \Big[\max_{q\in[0,1]}\big(B(p,q)-c_V(q)\big)\Big]\tau(dp)-c_P(\tau)\right\},
\]

\smallskip
and the two-layer problem corresponding to \emph{reversed timing} is
\[
U^{\mathrm{rev}}
=\max_{q\in[0,1]}\left\{\max_{\tau}\Big[\int B(p,q)\,\tau(dp)-c_P(\tau)\Big]-c_V(q)\right\}.
\]

\begin{proposition}\label{prop:timing}
The baseline timing dominates the
reversed timing:
$
U^{\mathrm{bas}}\ \ge\ U^{\mathrm{rev}}.
$
\end{proposition}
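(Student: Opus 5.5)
The argument is a max--min interchange (``$\max\int \ge \int\max$'') that uses only the fact that the baseline problem lets management depend on the realized posterior, whereas the reversed problem forces a single constant $q$. No structure beyond feasibility is needed: neither the bang--bang characterization nor the concavification results play a role. The plan is to fix an arbitrary feasible pair in the reversed problem and show that the baseline problem can replicate it.

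First I fix any $q\in[0,1]$ and any Bayes--plausible $\tau\in\Delta_0([0,1])$. Then I define the baseline posterior value $\phi(p):=\sup_{q'\in[0,1]}\big(B(p,q')-c_V(q')\big)$. Since $q$ is a feasible choice inside this supremum, $\phi(p)\ge B(p,q)-c_V(q)$ holds pointwise in $p$.

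Next I integrate this inequality against $\tau$ and subtract $c_P(\tau)$. Because $\tau$ is a probability measure, $\int c_V(q)\,\tau(dp)=c_V(q)$, so
\[
\int \phi(p)\,\tau(dp)-c_P(\tau)\;\ge\;\int B(p,q)\,\tau(dp)-c_P(\tau)-c_V(q).
\]
The left-hand side is at most $U^{\mathrm{bas}}$, since the same $\tau$ is feasible in the baseline outer problem. Taking the supremum over $\tau$ and then over $q$ on the right-hand side gives $U^{\mathrm{bas}}\ge U^{\mathrm{rev}}$. Equivalently, the constant rule $q(p)\equiv q$ paired with $\tau$ is a feasible baseline policy that earns exactly the reversed payoff of $(q,\tau)$.

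The only real obstacle is technical: the argument needs $\phi$ to be measurable and $\int\phi\,d\tau$ to be well defined, and the maxima in the statement must be attained or else read as suprema. I would handle this in two ways.

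\emph{Boundedness.} $B$ is bounded, since it takes one of two affine values in $p$, and $c_V\ge 0$ with $c_V(0)=0$. Hence $\phi$ is bounded between $B(p,0)$ and $\max_h u(h)\cdot p$.

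\emph{Measurability.} I would avoid any measurable-selection argument by using the paper's threshold structure. Write $\pi$ for the cutoff map and $q_{\min}$ for the minimal management that induces the agent to choose $g$. Then $B(p,q)$ equals $u(g)\cdot p$ if $p\ge\pi(q)$ and $u(f)\cdot p$ otherwise. It follows that $\phi(p)=\max\{B(p,0),\,u(g)\cdot p-k_V C(q_{\min}(p))\}$ on $[\pi_L,\pi_H]$, and $\phi$ is explicit elsewhere. Since $q_{\min}$ is monotone and $C$ is monotone, $\phi$ is Borel.

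If instead the maxima are interpreted as suprema, the same inequality chain goes through verbatim, so the conclusion does not depend on existence of optimizers.
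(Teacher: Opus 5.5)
Your proof is correct and is essentially the paper's argument: fix a constant $q$, use the pointwise bound $\max_{\tilde q}\big(B(p,\tilde q)-c_V(\tilde q)\big)\ge B(p,q)-c_V(q)$, integrate against an arbitrary Bayes--plausible $\tau$, subtract $c_P(\tau)$, and maximize over $\tau$ and then $q$. Your added remarks on boundedness, Borel measurability of the posterior value function, and reading the maxima as suprema are correct refinements that the paper leaves implicit.
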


The inequality reflects a pure ``flexibility'' effect: in the baseline timing, choosing
$q$ after observing the posterior allows the principal to tailor management to the realized
belief, effectively spending management only where it matters (typically inside the conflict
region $(\pi_L,\pi_H)$). The reversed timing forces a single ex ante $q$ to be used at all posteriors,
creating a compromise that is generally suboptimal when the optimal $q^*(p)$ varies across
posteriors. The comparison is typically strict whenever $\tau^*$ is informative and the
baseline optimal management $q^*(p)$ is nonconstant on the support of $\tau^*$; it becomes
an equality under pooling or whenever the same management level is optimal at all posteriors
that occur.

\subsection{Interpreting the specialization in applications}
\label{subsec:app_mapping}

We briefly outline environments where a principal naturally chooses both information acquisition and bias management, and illustrate how the binary-state, two-action, convex-hull specialization
can be interpreted within these environments.

\emph{Public procurement and anticorruption: intelligence vs professionalization.}
A ministry delegates procurement to an agency that may be biased by corruption or incompetence.
The ministry can acquire information via market intelligence and benchmarking, and can manage bias via training, rotation, peer review, and approval protocols.
Intelligence without professionalization can be wasted; professionalization can raise the marginal value of intelligence by ensuring it is used appropriately. Let $H$ denote ``procurement market is distorted / high corruption risk'' and $L$ denote ``competitive/clean market.''
Action $g$ is ``use a strict protocol'' (e.g.\ open tender, enhanced due diligence), while $f$ is ``use a discretionary/fast-track procedure.''
Market intelligence (benchmarking, supplier screening) produces posterior $p$.
A biased agency may prefer discretionary procedures even at moderately high $p$ (due to convenience, relationships, or rents),
corresponding to a high cutoff.
Professionalization---training, rotation, peer review---reduces that bias, lowering the cutoff, hence increasing the probability of selecting $g$
for a given posterior.
Here, $q$ is the intensity of professionalization and oversight, and $\tau$ is the information structure from intelligence gathering.

\emph{Corporate capital allocation: analytics vs governance.}
Headquarters delegates its investment decisions to business units.
It can finance market analytics and forecasting, and manage bias by strengthening governance (investment committees, veto rights, internal controls).
Governance is particularly valuable at borderline posteriors---precisely where more informative analytics can concentrate decision weight. Let $H$ denote ``project is truly high NPV'' and $L$ denote ``project is low NPV.''
Action $g$ is ``approve/invest'' and $f$ is ``reject/defer.''
Analytics and market research generate posterior $p=\Pr(H)$.
A division manager may be biased toward toward $f$ (career risk aversion).
Bias management $q$ captures governance (investment committees, veto rights, internal controls) that shifts the manager's objective
toward the firm's benchmark, changing the cutoff $\pi(q)$ at which investment is approved.
Convex-hull management corresponds to partial alignment: governance cannot perfectly eliminate career concerns, but can reduce their weight.

\emph{Sales and consumer finance: lead scoring vs suitability culture.}
A firm delegates targeting and product recommendations to sales.
It can invest in data and lead scoring (information) and manage bias via training, incentives, and compliance culture.
Better information may amplify mis-selling unless bias is managed; alignment can make fine-grained information more valuable. Let $H$ denote ``customer is unsuitable for the product'' and $L$ denote ``customer is suitable.''
Action $g$ is ``offer safe alternative'' and $f$ is ``sell/allocate product.''
Lead scoring and underwriting produce posterior $p=\Pr(H)$.
Sales incentives may bias the agent toward $f$, raising the cutoff too much not to sell (overselling).
Bias management $q$ captures compliance culture, suitability training, and internal monitoring that lower the cutoff toward the principal's benchmark.
The binary-action framework captures the extensive-margin decision, while $p$ captures informativeness of customer signals.

\emph{Regulation and compliance: audits vs enforcement capacity.}
A regulator delegates compliance decisions to a firm.
The regulator can acquire information by mandating audits and reports, and can manage bias by investing in enforcement capacity, compliance frameworks, and governance rules.
Better audits matter only to the extent the regulator can induce action conditional on what is learned; conversely, enforcement capacity is most valuable when targeted to decisive posteriors generated by information. Let $H$ denote ``high harm'' and $L$ denote ``low harm''.
The delegated action $g$ is ``remediate'', while $f$ is ``no remediation.''
A realized audit or inspection produces posterior $p=\Pr(H)$.
The firm (agent) is biased toward $f$ (cost-saving), so absent management it requires a high posterior to take $g$:
this corresponds to a high cutoff $\pi_H$ under taste $v_H$.
Bias management $q$ captures strengthening enforcement, monitoring capacity, or governance mandates that make the firm
more responsive to evidence, lowering the cutoff to $\pi(q)$.
The convex-hull interpretation is that institutional reforms move the firm's decision rule from a ``profit-first'' taste toward a more compliant taste.
Information $\tau$ is the audit regime: how often and how decisively inspections generate high posteriors.

\emph{Credit underwriting: scoring models vs risk culture.}
A bank delegates credit decisions to loan officers.
It can invest in credit scoring and data (information) and manage bias via risk culture, training, and approval layers.
The interaction determines whether resources are better spent on more data or better governance. Let $H$ denote ``borrower is high default risk'' and $L$ denote ``low risk.''
Action $g$ is ``deny/tighten terms'' and $f$ is ``approve/standard terms.''
Credit scoring produces posterior $p$.
Loan officers may be biased toward approval (relationship lending).
Risk culture, approval layers, and audit processes shift the cutoff $\pi(q)$.
Information acquisition $\tau$ is the choice of model/data investment that affects the posterior distribution of risk assessments.

\section{Analysis}\label{sec:analysis}
In this section, we characterize optimal bias management, solve the information-design problem via concavification, study monotone informativeness on the two-posterior domain, summarize the resulting regime structure and parameter trajectories, and discuss regulatory implications.

\subsection{Solving the two-layer problem}
In the two-layer problem, the outer instrument $\tau$ (information-acquisition) controls the dispersion of posteriors around the prior $p_0$,
while the inner instrument $q$  (bias-management) reshapes the posterior-by-posterior mapping from beliefs to actions. We begin our analysis with solving the inner problem and then turn to the outer  problem.

\subsubsection{Optimal bias management at each posterior}
Given posterior \(p\) and management intensity \(q\), the agent chooses \(g\) iff \(p\ge \pi(q)\).
Hence the principal's posterior payoff is $\Psi(p,q)=u(f)\cdot p+\1\{p\ge \pi(q)\}\Delta_u(p)-k_VC(q).$ Define the posterior value function $\phi_{k_V}(p):=\max_{q\in[0,1]} \Psi(p,q).$

\begin{lemma}\label{thm:bangbang}
For every posterior \(p\in[0,1]\) there exists an optimal management choice \(q^*(p;k_V)\) such that: $q^*(p;k_V)=0 \;\text{for } p<\pi_L\ \text{and for } p\ge \pi_H,$ and for \(p\in[\pi_L,\pi_H),\)
\(
q^*(p;k_V)= q_{\min}(p) \; \text{if} \; \Delta_u(p)>k_VC(q_{\min}(p)) \; \text{and} \; q^*(p;k_V)= 0 \; \text{if} \; \Delta_u(p)\le k_VC(q_{\min}(p)),
\)
and so,
$
\phi_{k_V}(p)=u(f)\cdot p+\max\Big\{0,\ \Delta_u(p)-k_VC(q_{\min}(p))\Big\}.
$
\end{lemma}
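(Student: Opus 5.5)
The plan is a direct posterior-by-posterior optimization of $\Psi(p,\cdot)$ on $[0,1]$. The idea is to fix $p$ and describe the \emph{compliance set} $Q(p):=\{q\in[0,1]: p\ge \pi(q)\}$. Because $\pi$ is continuous and strictly decreasing with $\pi(0)=\pi_H$ and $\pi(1)=\pi_L$, $Q(p)$ is a closed upper interval of $[0,1]$. Three cases follow. If $p<\pi_L$, then $Q(p)=\emptyset$. If $p\ge\pi_H$, then $Q(p)=[0,1]$. If $p\in[\pi_L,\pi_H)$, then $Q(p)=[q_{\min}(p),1]$, where the infimum defining $q_{\min}(p)$ is attained by closedness. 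At the boundary $p=\pi(q)$ the agent's tie is broken in the principal's favour, and the weak inequality in the definition of $\Psi$ already encodes this. So $\Psi(p,q)=u(f)\cdot p+\1\{q\in Q(p)\}\Delta_u(p)-k_VC(q)$.

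Next I split $[0,1]$ into $Q(p)$ and its complement and maximize separately.
\begin{itemize}
\item On the complement, the action is $f$ regardless of $q$, so the payoff is $u(f)\cdot p-k_VC(q)$. Since $C\ge 0$ is increasing with $C(0)=0$, this is maximized at the smallest $q$ outside $Q(p)$, namely $q=0$, with value $u(f)\cdot p$. This requires $0\notin Q(p)$, i.e.\ $p<\pi_H$.
\item On $Q(p)$, the action is $g$, so the payoff is $u(f)\cdot p+\Delta_u(p)-k_VC(q)$. Monotonicity of $C$ makes $q_{\min}(p)$, the minimum of $Q(p)$, optimal. This is the ``no over-management'' step.
\end{itemize}
Comparing the two candidate values gives the claimed result on $[\pi_L,\pi_H)$: choose $q_{\min}(p)$ iff $\Delta_u(p)>k_VC(q_{\min}(p))$, and choose $0$ otherwise. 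Ties are resolved toward $0$, which is allowed since the lemma only asserts existence of an optimal selection. For $p<\pi_L$ only the first branch exists, so $q^*=0$ and $\phi_{k_V}(p)=u(f)\cdot p$. For $p\ge\pi_H$ only the second branch exists with $q_{\min}=0$, so $q^*=0$ and $\phi_{k_V}(p)=u(f)\cdot p+\Delta_u(p)$. Existence of a maximizer is automatic here, because in each case the candidates are explicit. Lower semicontinuity of $C$ is not even needed, only its monotonicity and $C(0)=0$.

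The only real obstacle is reconciling the unified formula $\phi_{k_V}(p)=u(f)\cdot p+\max\{0,\Delta_u(p)-k_VC(q_{\min}(p))\}$ with the two outer regions. This is a matter of conventions, not analysis.
\begin{itemize}
\item For $p<\pi_L$, $q_{\min}(p)$ is an infimum over the empty set. I will read it as $+\infty$, or equivalently read the $\max$ term as $0$, so that the formula returns $u(f)\cdot p$.
\item For $p\ge\pi_H$, the agent plays $g$ at every $q$. The formula, with $q_{\min}=0$, then matches the true value $u(f)\cdot p+\Delta_u(p)$ only if $\Delta_u(p)\ge 0$ there.
\end{itemize}
I will make the second point explicit. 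Since $\Delta_u$ is increasing, it suffices that $\Delta_u(\pi_H)\ge 0$, i.e.\ that the principal weakly prefers $g$ wherever the unmanaged agent already chooses it. This is the maintained interpretation of the conflict region $(\pi_L,\pi_H)$. Alternatively, one can note that the $\max$ should be read as $\Delta_u(p)$ on $[\pi_H,1]$. With these conventions stated, the case analysis above completes the proof.
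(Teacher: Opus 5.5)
Your proposal is correct and follows the paper's own argument: fix $p$, split $[0,1]$ into the set where the agent plays $g$ and its complement, take $q=0$ on the complement and $q_{\min}(p)$ on the compliance set by monotonicity of $C$, and compare the two values. Your version is slightly more careful than the paper's proof, which lumps $p=\pi_L$ with $p<\pi_L$ even though the agent plays $g$ at $q=1$ there; also, your side condition $\Delta_u(\pi_H)\ge 0$ need not be assumed, since continuity of the affine taste difference makes the $v_H$-agent indifferent at $\pi_H$, and principal-favourable tie-breaking then selects $g$ only if $u(g)\cdot\pi_H\ge u(f)\cdot\pi_H$.
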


A key implication of this result is that the optimal management intensity at a given posterior is at a corner: either no management when tastes already select the desired action, or the minimal/most cost-effective adjustment needed to flip the choice in the conflict region. As such, it is inefficient to ``spread'' small interventions broadly; instead the principal should concentrate effort on a small set of pivotal beliefs (via information) and apply management only where it changes the action, yielding sharp regime thresholds and clear comparative statics for when steering operates through information, through management, or through both.

\emph{Break-even cost.} In the convex-hull specialization, for each posterior $p\in(\pi_L,\pi_H)$ there is a
\emph{minimal} management intensity $q_{\min}(p)$ that just flips the agent's choice to the
principal-preferred action $g$. The gross gain from flipping is $\Delta_u(p)$, while the
corresponding minimal cost is $k_V C(q_{\min}(p))$. This motivates the
\emph{break-even cost} which is defined for \(p\in(\pi_L,\pi_H)\) as
\(
\bar k_V(p):=\frac{\Delta_u(p)}{C(q_{\min}(p))},
\)
with \(\bar k_V(p)=+\infty\) if \(C(q_{\min}(p))=0\). The break-even cost is the largest management-cost parameter under which it is still profitable to apply the minimal
intervention at posterior $p$. Equivalently, in the minimal-intervention regime,
$q^*(p)>0$ if and only if $k_V\le \bar k_V(p)$. \medskip

If $\Delta_u(p)$ is increasing, $q_{\min}(p)$ is decreasing, and $C$ is increasing, then $C(q_{\min}(p))$ is decreasing and hence $\bar k_V(p)$ is increasing in $p$. 

\begin{proposition}\label{lem:kbarMon}
The threshold function \( \bar k_V(p)\) is increasing on \((\pi_L,\pi_H)\) (with the \(+\infty\) convention when \(C(q_{\min}(p))=0\)).
\end{proposition}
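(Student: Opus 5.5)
The plan is to show that the numerator $\Delta_u(p)$ is increasing, the denominator $C(q_{\min}(p))$ is decreasing and nonnegative, and then combine the two, handling the $+\infty$ convention separately. Take $\pi_L<p<p'<\pi_H$.

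\emph{Step 1 (denominator).} Since $\pi$ is continuous and strictly decreasing from $\pi_H$ to $\pi_L$, the set $\{q\in[0,1]: p\ge \pi(q)\}$ is a closed up-interval $[\pi^{-1}(p),1]$. Hence $q_{\min}(p)=\pi^{-1}(p)$, and this is strictly decreasing in $p$. Because $C$ is increasing, we get $C(q_{\min}(p'))\le C(q_{\min}(p))$. We also have $C\ge 0$, since $C$ maps into $\R_+$.

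\emph{Step 2 (numerator).} By the standing assumption, $\Delta_u$ is increasing, so $\Delta_u(p)\le \Delta_u(p')$.

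\emph{Step 3 (sign of the numerator and case split).} This is the step I expect to be the only real obstacle. A ratio of an increasing numerator over a decreasing positive denominator is increasing only when the numerator is nonnegative. If $\Delta_u<0$, shrinking the denominator makes the ratio more negative.

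I will first argue that $\Delta_u(p)\ge 0$ is the relevant case. Where $\Delta_u(p)<0$, the principal never wants to flip the action, by Lemma~\ref{thm:bangbang}, so $\bar k_V(p)$ is economically meaningless there. If the paper intends $\bar k_V$ only on the region where $\Delta_u\ge0$, I will state this restriction explicitly. Note that because $\Delta_u$ is increasing, this region is an up-interval of $(\pi_L,\pi_H)$, so monotonicity on it suffices. Otherwise I would add $\Delta_u\ge 0$ on $(\pi_L,\pi_H)$ as the implicit premise of the conflict region, since $g$ is the principal-preferred action there.

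With $\Delta_u\ge 0$, the cases are as follows.
\begin{itemize}
\item If $C(q_{\min}(p'))=0$, then $\bar k_V(p')=+\infty\ge \bar k_V(p)$.
\item If $C(q_{\min}(p'))>0$, then $C(q_{\min}(p))\ge C(q_{\min}(p'))>0$, and
\[
\bar k_V(p)=\frac{\Delta_u(p)}{C(q_{\min}(p))}\le \frac{\Delta_u(p')}{C(q_{\min}(p))}\le\frac{\Delta_u(p')}{C(q_{\min}(p'))}=\bar k_V(p').
\]
\end{itemize}
Also, $C(q_{\min}(p))=0$ forces $C(q_{\min}(p'))=0$, so once $\bar k_V$ reaches $+\infty$ it stays there. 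This makes the extended-valued function consistently monotone and completes the argument.
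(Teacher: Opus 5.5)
Your argument is the paper's own (increasing numerator $\Delta_u$ over a decreasing, nonnegative denominator $C(q_{\min}(\cdot))$, with the $+\infty$ convention covering zero denominators), and it is correct, with a more explicit case split than the paper gives. Your Step 3 caveat is not pedantry but a real gap in the paper's one-line proof: without $\Delta_u\ge 0$ the claim can fail under the paper's stated assumptions (e.g.\ $\pi(q)=0.7-0.4q$, $C(q)=q^2$, $\Delta_u(p)=p-0.6$ give $\bar k_V(0.35)\approx-0.33>\bar k_V(0.45)\approx-0.38$), so the restriction you impose is genuinely needed; it does hold in both of the paper's numerical examples.
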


Economically, higher posteriors make management more cost-effective:
the benefit from selecting $g$ is larger and/or the intervention needed to flip the choice is smaller, so the principal is willing to pay for management at a wider range of $k_V$.

\emph{Cutoff posterior.}
When $\bar k_V(p)$ is increasing, the set of posteriors where minimal flipping is profitable,
$\{p\in(\pi_L,\pi_H): k_V\le \bar k_V(p)\}$, is an upper interval. We summarize it by the
cutoff posterior
\[
\hat p(k_V)\;:=\;\inf\big\{p\in(\pi_L,\pi_H): k_V\le \bar k_V(p)\big\},
\]
the lowest belief at which the principal is still willing to pay for the minimal management
that induces $g$. Since $\bar k_V(p)$ is increasing, $\hat p(k_V)$ is increasing in $k_V$, so the
posterior region in which management is used shrinks monotonically as $k_V$ rises. 

\begin{corollary}\label{cor:phat}
For \(p\in(\pi_L,\pi_H)\), $\;q^*(p;k_V)>0\;$ iff $\; p\ge \hat p(k_V),$
and \(\hat p(k_V)\) is increasing.
\end{corollary}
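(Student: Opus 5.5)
The plan is to read Corollary~\ref{cor:phat} off Lemma~\ref{thm:bangbang} and Proposition~\ref{lem:kbarMon}, working on the open conflict region $(\pi_L,\pi_H)$ throughout.

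\emph{Step 1: translate $q^*>0$ into a condition on $\bar k_V$.} By Lemma~\ref{thm:bangbang}, for $p\in(\pi_L,\pi_H)$ the optimal choice is either $q_{\min}(p)$ or $0$. Since $q_{\min}(p)\in(0,1)$ on this open interval, we have $q^*(p;k_V)>0$ exactly when the lemma selects $q_{\min}(p)$. By the lemma, that happens iff $\Delta_u(p)>k_V C(q_{\min}(p))$.

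If $C(q_{\min}(p))>0$, this inequality is equivalent to $k_V<\bar k_V(p)$. If $C(q_{\min}(p))=0$, then $\bar k_V(p)=+\infty$ by convention, and the condition reduces to $\Delta_u(p)>0$. I will treat this degenerate case, and the boundary case of equality, explicitly. In each, either the tie-breaking in the lemma or the convention $k_V\le\bar k_V$ used in the text should be adopted consistently. I will state that ties $\Delta_u(p)=k_VC(q_{\min}(p))$ are resolved in favour of management, which is also optimal since the principal is indifferent, so that $q^*>0\iff k_V\le \bar k_V(p)$.

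\emph{Step 2: the set is an upper interval.} Let $M(k_V)=\{p\in(\pi_L,\pi_H):k_V\le\bar k_V(p)\}$. By Proposition~\ref{lem:kbarMon}, $\bar k_V$ is increasing. Hence if $p\in M(k_V)$ and $p'>p$, then $\bar k_V(p')\ge \bar k_V(p)\ge k_V$, so $p'\in M(k_V)$. It follows that $M(k_V)$ is an interval with left endpoint $\hat p(k_V)=\inf M(k_V)$, with the convention $\hat p=\pi_H$ when $M$ is empty. This gives $p>\hat p\Rightarrow p\in M$ and $p\in M\Rightarrow p\ge\hat p$.

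\emph{Step 3: the endpoint itself (the main obstacle).} The remaining issue is whether $\hat p(k_V)\in M(k_V)$, that is, whether the infimum is attained. For this I would show that $\bar k_V$ is upper semicontinuous from the right, so that $k_V\le \bar k_V(p_n)$ with $p_n\downarrow\hat p$ yields $k_V\le\bar k_V(\hat p)$. The ingredients are:
\begin{itemize}
\item $\Delta_u$ is affine in $p$, hence continuous;
\item $q_{\min}$ is continuous, since $\pi$ is continuous and strictly decreasing, so $q_{\min}=\pi^{-1}$ on $[\pi_L,\pi_H]$;
\item $C$ is l.s.c.
\end{itemize}
Then $C(q_{\min}(\hat p))\le\liminf C(q_{\min}(p_n))$, and dividing a continuous numerator by an l.s.c.\ denominator gives $\bar k_V(\hat p)\ge\limsup\bar k_V(p_n)\ge k_V$, provided $\Delta_u(\hat p)\ge0$. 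If $\Delta_u(\hat p)<0$, then $\hat p\notin M$, but also $q^*=0$ there. In that case I will interpret the statement with the infimum attained only up to this boundary point, noting that $C$ convex and finite on $(0,1)$ is continuous there, so only the sign of $\Delta_u$ matters. This closes the ``iff''.

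\emph{Step 4: monotonicity of $\hat p$.} If $k_V\le k_V'$, then $M(k_V')\subseteq M(k_V)$, because $k_V'\le\bar k_V(p)$ implies $k_V\le\bar k_V(p)$. Taking infima gives $\hat p(k_V)\le\hat p(k_V')$.
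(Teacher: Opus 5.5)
Your argument follows the paper's route. Lemma~\ref{thm:bangbang} turns $q^*>0$ into a threshold condition on $\bar k_V(p)$. Proposition~\ref{lem:kbarMon} makes the relevant superlevel set an upper interval. Monotonicity of $\hat p$ then comes from nesting of these sets in $k_V$. Your extra care is not cosmetic. The paper's proof says management is used iff $k_V<\bar k_V(p)$, then works with $\{p:\bar k_V(p)\ge k_V\}$ and asserts, without argument, that this set is $[\hat p(k_V),\pi_H]$. Whenever $\hat p(k_V)\in(\pi_L,\pi_H)$ and $C(q_{\min}(\hat p))>0$, $\bar k_V$ is continuous at $\hat p$, because a finite convex $C$ is continuous on $(0,1)$. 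Points just below $\hat p$ lie outside $M(k_V)$ and points above lie inside, so $\bar k_V(\hat p)=k_V$ exactly. The cutoff posterior is therefore a tie, and there the lemma's own selection sets $q^*=0$, so the ``iff'' fails at $p=\hat p$. Your re-selection at ties is legitimate, since both choices are optimal there, and it is needed for the statement as written. Your Step~3 supplies the attainment of the infimum that the paper takes for granted.

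The genuine gap is the degenerate case you flag in Step~1 but then fold into the claim $q^*>0\iff k_V\le\bar k_V(p)$. Suppose $C(q_{\min}(p))=0$ and $\Delta_u(p)<0$. This is possible under the paper's assumptions, e.g.\ $C(q)=\max\{0,q-q_0\}$ with $q_0\in(0,1)$ and $\Delta_u$ negative at some $p$ with $q_{\min}(p)\le q_0$. Then $\bar k_V(p)=+\infty$, so $p\in M(k_V)$. Yet $q_{\min}(p)$ is strictly suboptimal, and both the lemma's selection and yours give $q^*(p)=0$.

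Your Step~3 remark is also wrong in this case. If $\hat p$ is interior and $\Delta_u(\hat p)<0$, then points of $M$ just above $\hat p$ must have $C(q_{\min})=0$, since otherwise $\bar k_V<0\le k_V$. Lower semicontinuity then forces $C(q_{\min}(\hat p))=0$, so $\hat p\in M$, not $\hat p\notin M$. Moreover, the failure is not confined to $\hat p$: it covers every $p\in M$ with $\Delta_u(p)<0$.

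In this corner the corollary itself is false, so no reinterpretation of the boundary point closes the ``iff''. You need an added hypothesis, most naturally $\Delta_u\ge 0$ on $(\pi_L,\pi_H)$. Proposition~\ref{lem:kbarMon} implicitly requires this anyway, since a negative increasing numerator over a positive decreasing denominator need not be increasing. Under this hypothesis your tie rule gives $q^*>0$ wherever $C(q_{\min}(p))=0$, so Step~1's equivalence holds. The $\Delta_u(\hat p)<0$ branch of Step~3 becomes vacuous, and the rest of your proof goes through.
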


This cutoff description is useful because shifts in $\hat p(k_V)$ change the shape of the net posterior payoff $g_{k_P,k_V}(p):=\phi_{k_V}(p)-k_P\kappa(p)$, and hence can trigger concavification contact-set reselection in the outer information design problem.

\subsubsection{Optimal information: concavification and two-posteriors}

Consider the information problem the principal solves
\[U(k_P,k_V)
=\max_{\tau\in\Delta([0,1])}\left\{\int_0^1 g_{k_P,k_V}(p)\,\tau(dp)\right\}
\quad\text{s.t.}\quad \int_0^1 p\,\tau(dp)=p_0.\]

Our outer information-design problem inherits the geometry of Bayesian persuasion. In \cite{kamenicaGentzkow2011}, a principal
commits to an information structure, equivalently to a Bayes--plausible distribution over posteriors, and in binary-state
environments the optimum is characterized by concavifying the posterior payoff function, which implies that an optimal signal
can be taken to have at most two posteriors in its support. \cite{GentzkowKamenica2014} extend this insight to \emph{costly}
persuasion by subtracting a posterior-separable convex information cost from the payoff, so that the optimal experiment is
still obtained by concavifying an \emph{effective} payoff $u(p)-k_P\kappa(p)$. Our two-layer delegation problem
preserves this outer concavification logic, but the key difference is that the payoff from each posterior is \emph{endogenous}:
at every posterior the principal can additionally choose a costly bias-management policy that reshapes the agent's decision
rule. Thus the object being concavified is $g_{k_V,k_P}(p)=\phi_{k_V}(p)-k_P\kappa(p)$, where $\phi_{k_V}(p)$ is itself the
value of an inner optimization.

We write $\cav g_{k_P,k_V}$ for the \emph{concave envelope} (smallest concave majorant) of $g_{k_P,k_V}$ on
$[0,1]$, also called the \emph{concavification} of $g_{k_P,k_V}$ in the Bayesian persuasion literature
(\cite{rockafellar1970,kamenicaGentzkow2011}). We obtain the following characterization on the optimal solution of the information acquisition problem.

\begin{lemma}\label{thm:concav}

The following are true:
\begin{enumerate}[leftmargin=18pt]
\item \(U(k_P,k_V)=(\cav g_{k_P,k_V})(p_0)\).
\item There exists an optimal \(\tau^*(k_P,k_V)\) supported on at most two posteriors.
\item Pooling is optimal iff \((\cav g_{k_P,k_V})(p_0)=g_{k_P,k_V}(p_0)\), in which case \(\tau^*=\delta_{p_0}\).
\end{enumerate}
\end{lemma}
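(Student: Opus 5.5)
The plan is to reduce the problem to the standard concavification argument of \cite{kamenicaGentzkow2011}, applied to the endogenous net payoff $g_{k_P,k_V}=\phi_{k_V}-k_P\kappa$. Lemma~\ref{thm:bangbang} supplies the closed form of $\phi_{k_V}$. The only genuinely model-specific work is a regularity check on $g_{k_P,k_V}$ that ensures the supremum over Bayes-plausible $\tau$ is attained.

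\emph{Step 1 (regularity).} I will show that $g_{k_P,k_V}$ is bounded and upper semicontinuous on $[0,1]$.
\begin{enumerate}[leftmargin=18pt]
\item Boundedness is immediate: $u(f)\cdot p$ and $\Delta_u$ are affine in $p$, $\kappa$ is continuous on a compact set, and the management term lies between $0$ and $\max|\Delta_u|$.
\item For upper semicontinuity, I use the formula from Lemma~\ref{thm:bangbang} region by region. Since $\pi$ is continuous and strictly decreasing, $q_{\min}=\pi^{-1}$ is continuous on $[\pi_L,\pi_H]$. Because $C$ is l.s.c., $-k_VC(q_{\min}(\cdot))$ is u.s.c. there, and a maximum of u.s.c. functions ($0$ and $\Delta_u-k_VC\circ q_{\min}$) is u.s.c.
\item On $[0,\pi_L)$ and on $[\pi_H,1]$, $\phi_{k_V}$ is affine.
\item The boundary points $\pi_L$ and $\pi_H$ need separate care. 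Near $\pi_H$ from below, $q_{\min}\to 0$ and $C(0)=0$, so the limsup of $\phi_{k_V}$ is at most $u(f)\cdot\pi_H+\max\{0,\Delta_u(\pi_H)\}$. I need to compare this with the value prescribed at $\pi_H$.
\end{enumerate}
As a more robust alternative, I would write $\phi_{k_V}(p)=\max_q\Psi(p,q)$ and argue that $\Psi$ is jointly u.s.c. in $(p,q)$ wherever the flip is desirable. This uses that $\{(p,q):p\ge\pi(q)\}$ is closed and that $-C$ is u.s.c. Upper semicontinuity of $\phi_{k_V}$ then follows from compactness of $[0,1]$ in $q$. Subtracting the continuous $k_P\kappa$ preserves upper semicontinuity.

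\emph{Step 2 (value equals the concave envelope).} Let $\mathcal H$ be the convex hull of the graph of $g:=g_{k_P,k_V}$ in $\R^2$.
\begin{enumerate}[leftmargin=18pt]
\item Any Bayes-plausible $\tau$ gives $(p_0,\int g\,d\tau)$ in the closed convex hull of the graph. Hence $\int g\,d\tau\le(\cav g)(p_0)$, because $\cav g$ is the smallest concave majorant and concave majorants dominate averages (Jensen).
\item Conversely, $(\cav g)(p_0)=\sup\{z:(p_0,z)\in\mathcal H\}$. Upper semicontinuity and boundedness on the compact domain make the hypograph hull closed, so the supremum is attained by a point of $\mathcal H$.
\item By Carath\'eodory's theorem in $\R^2$, applied to the point $(p_0,\cdot)$ on the upper boundary of $\mathcal H$, that point is a convex combination of at most two graph points $(p_-,g(p_-))$ and $(p_+,g(p_+))$ with $p_-\le p_0\le p_+$. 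Such a point lies on a supporting line, so one can drop to an edge and two points suffice.
\item The corresponding two-point $\tau^*$ is Bayes plausible and attains $(\cav g)(p_0)$. This proves parts 1 and 2 simultaneously.
\end{enumerate}
An equivalent route is weak-$*$ compactness of $\Delta_0([0,1])$ together with the fact that $\tau\mapsto\int g\,d\tau$ is u.s.c. for u.s.c. bounded $g$; this gives existence, and the two-point reduction then uses the supporting line at $p_0$.

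\emph{Step 3 (pooling).} The value of $\delta_{p_0}$ is $g(p_0)$, and $g(p_0)\le(\cav g)(p_0)=U$ always holds. So $\delta_{p_0}$ is optimal iff equality holds.

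\emph{Main obstacle.} I expect the main obstacle to be Step 1 at the kinks $\pi_L$ and $\pi_H$, where the management option switches on or off. If $\Delta_u$ can be negative near $\pi_H$, the forced choice of $g$ at $p\ge\pi_H$ might create a downward jump that breaks upper semicontinuity. If that happens, I will fall back on showing that the supremum in part 1 still equals $(\cav g)(p_0)$ without attainment, and I will obtain attainment in part 2 by selecting contact points inside the regions where $g$ is u.s.c.
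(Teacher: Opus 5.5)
Your argument is the paper's argument: Jensen's inequality gives $\int g\,d\tau\le(\cav g)(p_0)$, achievability comes from the two contact points of a supporting line to $\cav g$ at $p_0$, and pooling is optimal iff there is no gap at $p_0$. Your Step~1 supplies something the paper's proof omits. The paper simply ``selects the contact points'' without checking that the supporting line touches $g$ at actual posteriors rather than only in a limit. Upper semicontinuity, via your hull argument or the weak-$*$ compactness route, is exactly what justifies that step.

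The gap is in how you handle $\pi_H$. Your concern there is correct, but the fallback cannot work, because if $\Delta_u(\pi_H)<0$ then part~2 of the lemma is actually false.

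Here is an example. Take $p_0=0.75$, $\pi(q)=0.7-0.4q$ (so $\pi_L=0.3$, $\pi_H=0.7$), $u(f)\cdot p=p$, $\Delta_u(p)=p-0.9$, $\kappa(p)=(p-p_0)^2$, $k_P=10$, and any $k_V$ and $C$.
\begin{itemize}
\item On $[0,0.7)$ management is never used, so $g(p)=p-10(p-0.75)^2$.
\item On $[0.7,1]$ the agent chooses $g$ regardless, so $g(p)=2p-0.9-10(p-0.75)^2$, which jumps down at $\pi_H$.
\item For a split $\{a,b\}$ with $a<0.7$ and $b>0.75$, the sign of the derivative of the split value in $a$ is the sign of $10(b-a)^2-(b-0.9)>0$. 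So the value is strictly increasing in $a$ for every $b$.
\item Hence the supremum $(\cav g)(p_0)\approx 0.684$ (with $b\approx 0.84$) is approached only as $a\uparrow\pi_H$. The supporting line touches $g$ only at $b$, so no Bayes-plausible $\tau$ attains it.
\end{itemize}
No choice of contact points ``inside the regions where $g$ is u.s.c.'' can rescue attainment here.

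Instead, close Step~1 by noting that the closed form of Lemma~\ref{thm:bangbang}, which you rely on, already presupposes $\Delta_u\ge 0$ on $[\pi_H,1]$. Its proof gives $\phi_{k_V}(p)=u(g)\cdot p$ there, and this equals $u(f)\cdot p+\max\{0,\Delta_u(p)\}$ only in that case. With $\Delta_u(\pi_H)\ge 0$, your own bound gives $\limsup_{p\uparrow\pi_H}\phi_{k_V}(p)\le u(f)\cdot\pi_H+\Delta_u(\pi_H)=\phi_{k_V}(\pi_H)$. The jump at $\pi_L$ is upward from the left. So $\phi_{k_V}$, and hence $g_{k_P,k_V}$, is u.s.c.\ on $[0,1]$, and the rest of your proof goes through.
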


Lemma \ref{thm:concav} reduces the outer problem to a one-dimensional concavification exercise. Once we fix the effective posterior payoff $g_{k_P,k_V}$, Bayes plausibility restricts the principal to
posterior distributions with mean $p_0$, and the maximal attainable value at the prior is
exactly the concave envelope $(\cav g_{k_P,k_V})(p_0)$. In the binary-state case, any point
on this envelope can be implemented by a mixture of at most two posteriors $(p_-^\ast,p_+^\ast)$ with $p_-^\ast \le p_+^\ast$ , so there is
always an optimal binary experiment. Pooling is optimal precisely when $g_{k_P,k_V}$ is
already locally concave at the prior, so dispersing beliefs yields no gain and
$\tau^\ast=\delta_{p_0}$. A key implication for our two-layer setting is that comparative
statics operate through the \emph{shape} of $g_{k_P,k_V}$: the inner management problem can
change curvature and introduce or remove kinks in $\phi_{k_V}$, while $k_P$ scales the
convex cost $\kappa$. Because the optimal signal is pinned down by the supporting chord(s)
of $\cav g_{k_P,k_V}$ at $p_0$, small parameter changes can reselect the contact points and
generate discrete jumps in $(p_-^\ast,p_+^\ast)$ and hence in informativeness.

\subsection{Monotone informativeness in the two--posterior domain}

We now consider comparative statics implications for information acquisition in the two--posterior domain. In particular, we want to understand when information and management are \emph{complements} or \emph{substitutes} as the cost of each instrument varies.

Let \(\mathcal T_2\) be the set of Bayes--plausible distributions \(\tau\) over posteriors
\(p\in[0,1]\) with mean \(p_0\) and support size at most two. Any \(\tau\in\mathcal T_2\) can be written as
\(
\tau=\alpha\,\delta_{p_-}+(1-\alpha)\,\delta_{p_+},
\;
0\le p_-\le p_0\le p_+\le 1,
\;
\alpha=\frac{p_+-p_0}{p_+-p_-},
\)
with the degenerate case \(\tau=\delta_{p_0}\) corresponding to \(p_-=p_+=p_0\).
We endow \(\mathcal T_2\) with the weak topology. If $\tau=\alpha\delta_{p_-}+(1-\alpha)\delta_{p_+}$ and $\tau'=\alpha'\delta_{p_-'}+(1-\alpha')\delta_{p_+'}$,
then their join $\tau\join\tau'$ has endpoints
\(
\big(\min\{p_-,p_-'\},\ \max\{p_+,p_+'\}\big),
\)
and their meet $\tau\meet\tau'$ has endpoints
\(
\big(\max\{p_-,p_-'\},\ \min\{p_+,p_+'\}\big).
\)

\emph{Convex order on \(\mathcal T_2\).}
For \(\tau,\tau'\in\mathcal T_2\), write \(\tau'\succeq_{cx}\tau\) if
\(\int\psi\,d\tau'\ge \int\psi\,d\tau\) for every convex \(\psi:[0,1]\to\mathbb R\).
On \(\mathcal T_2\), convex order reduces to endpoint nesting: if
\(p_-,p_+\in \supp(\tau)\) and \(p_-',p_+'\in\supp(\tau')\), then $\tau'\succeq_{cx}\tau
\;\text{iff}\;\;
p_-'\le p_- \ \text{and}\ p_+'\ge p_+ .$ Hence \((\mathcal T_2,\succeq_{cx})\) is a lattice.

\emph{Strong set order on \(\mathcal T_2\).}
For nonempty sets $A,B\subseteq \mathcal T_2$, define
\(A \SSO B\) if for all \(a\in A\) and for all \(b\in B\), \(a\join b\in A\) and \(a\meet b\in B\). Thus $A\SSO B$ means $A$ lies (weakly) \emph{above} $B$ in the lattice in Topkis' strong set order (\cite{Topkis1998}).

\subsubsection{Comparative statics in the information-cost parameter}

Fix $k_V\ge 0$ and consider the outer problem restricted to $\mathcal T_2$:
\[
\max_{\tau\in\mathcal T_2}\;U_{k_V}(\tau \mid k_P),
\; \text{where} \;\;
U_{k_V}(\tau \mid k_P):=\int_0^1 \phi_{k_V}(p)\,\tau(dp)\;-\;k_P\int_0^1 \kappa(p)\,\tau(dp).
\]

The key observation is that $k_P$ scales a \emph{convex} functional of the posterior.
Hence, as $k_P$ rises, more dispersed signals become relatively more expensive, suggesting
that $U_{k_V}$ has \emph{decreasing differences} in $(\tau,k_P)$ with respect to
$\succeq_{cx}$.

\begin{proposition}\label{lem:DD_kP_motiv}
For any $k_P'>k_P$ and any $\tau',\tau\in\mathcal T_2$ with $\tau'\succeq_{cx}\tau$,
\[
U_{k_V}(\tau' \mid k_P')-U_{k_V}(\tau \mid k_P')
\ \le\
U_{k_V}(\tau' \mid k_P)-U_{k_V}(\tau \mid k_P).
\]
\end{proposition}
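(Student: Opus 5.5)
The plan is to write out the difference of differences directly and observe that the $\phi_{k_V}$ terms cancel, so that only the information-cost term remains. Since $U_{k_V}(\tau\mid k_P)$ is affine in $k_P$, I would compute, for fixed $\tau,\tau'\in\mathcal T_2$,
\[
\big[U_{k_V}(\tau'\mid k_P')-U_{k_V}(\tau\mid k_P')\big]-\big[U_{k_V}(\tau'\mid k_P)-U_{k_V}(\tau\mid k_P)\big]
=-(k_P'-k_P)\Big[\int_0^1\kappa\,d\tau'-\int_0^1\kappa\,d\tau\Big].
\]
The integrals $\int\phi_{k_V}\,d\tau'$ and $\int\phi_{k_V}\,d\tau$ cancel because they do not depend on $k_P$. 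All integrals are finite: $\phi_{k_V}$ is bounded on $[0,1]$ by the formula in Lemma~\ref{thm:bangbang}, since $u(f)\cdot p$ and $\Delta_u$ are affine and the $\max\{0,\cdot\}$ term lies between $0$ and $\Delta_u(p)$. The function $\kappa$ is continuous on $[0,1]$, and both measures have at most two atoms.

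Given $k_P'-k_P>0$, the claim therefore reduces to showing $\int\kappa\,d\tau'\ge\int\kappa\,d\tau$. This follows immediately from the definition of $\tau'\succeq_{cx}\tau$, because $\kappa$ is convex on $[0,1]$ by assumption. I would note that no property of $\phi_{k_V}$ is used, so the bang--bang structure plays no role here.

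For completeness, and to connect with the endpoint-nesting description of $\succeq_{cx}$ on $\mathcal T_2$, I would also check this inequality directly in the two-point case, in case the reader takes nesting as the working definition. Suppose $p_-'\le p_-\le p_0\le p_+\le p_+'$. Each of $p_-$ and $p_+$ lies in $[p_-',p_+']$, so by convexity $\kappa(p_\pm)$ is at most the chord value $\ell(p_\pm)$, where $\ell$ is the affine function agreeing with $\kappa$ at $p_-'$ and $p_+'$. Hence
\[
\int\kappa\,d\tau\le\int\ell\,d\tau=\ell(p_0)=\int\ell\,d\tau'=\int\kappa\,d\tau'.
\]
This uses Bayes plausibility (both measures have mean $p_0$) and the fact that $\ell$ is affine. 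The pooling cases $p_-=p_+=p_0$ and $p_-'=p_+'=p_0$ fit the same argument.

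I do not expect a real obstacle. The only point needing care is the equivalence between the integral definition of convex order and endpoint nesting on $\mathcal T_2$, and the chord argument above settles the direction that is needed.
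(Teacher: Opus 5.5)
Your proof is correct and matches the paper's argument. The $\phi_{k_V}$ terms cancel, leaving $(k_P-k_P')\int\kappa\,(d\tau'-d\tau)$, and this is nonpositive because $\kappa$ is convex and $\tau'\succeq_{cx}\tau$. Your chord verification under the endpoint-nesting description is valid; the paper does not include it.
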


Proposition~\ref{lem:DD_kP_motiv} formalizes a simple force: if $\tau'$ is a mean-preserving spread
of $\tau$, then $\E_{\tau'}[\kappa(p)]\ge \E_{\tau}[\kappa(p)]$ because $\kappa$ is convex. Thus raising $k_P$ disproportionately penalizes more informative experiments, making them
less attractive at the margin. Decreasing differences implies monotone comparative statics for ordered choices such that optimal informativeness is decreasing in $k_P$, suggesting a monotone \emph{inward} response of optimal posteriors. To show this result clearly, we focus on a totally ordered subset of \(\mathcal T_{2}\) which we now define.

Fix endpoints $a\in[0,p_0)$ and $b\in(p_0,1]$. Let $\bar\lambda \in (0,1]$ and for each $\lambda\in[0,\bar\lambda]$, define the two posterior points
\(p_-(\lambda):=p_0-\lambda(p_0-a),\;p_+(\lambda):=p_0+\lambda(b-p_0).\) Let \( \tau(\lambda) :=\alpha\,\delta_{p_-(\lambda)}+(1-\alpha)\,\delta_{p_+(\lambda)}\) be the corresponding information-policy where \(\alpha:=\frac{b-p_0}{b-a}\) is fixed for each policy. Let \( \mathcal C:=\{\tau(\lambda):\lambda\in[0,\bar\lambda]\}\subset \mathcal T_2 \) denote the resulting $\lambda$-chain. For $\lambda',\lambda\in[0,\bar\lambda]$, we have
\( \lambda'\ge \lambda \;\text{iff}\; \tau(\lambda')\ \cx\ \tau(\lambda).\) As such, informativeness and dispersion coincide for information-policies on $\mathcal C$. In particular, $(\mathcal C,\cx)$ is totally ordered (a chain). For any $\lambda,\lambda'\in[0,\bar\lambda]$,
\( \tau(\lambda)\vee\tau(\lambda')=\tau(\max\{\lambda,\lambda'\}),
\;
\tau(\lambda)\wedge\tau(\lambda')=\tau(\min\{\lambda,\lambda'\}).
\)
Thus $\mathcal C$ is closed under $\vee,\wedge$ (a sublattice of $(\mathcal T_2,\cx)$). Let $M^{\mathcal C}_{k_V}(k_P):=\arg\max_{\tau\in\mathcal C} U_{k_V}(\tau \mid k_P)$ denote the set of maximizers of the outer problem when the domain is restricted to $\mathcal C$. We have the following comparative statics result.

\begin{proposition}\label{prop:MCS_set}
For any $k_P'>k_P$, $M^{\mathcal C}_{k_V}(k_P) \SSO M^{\mathcal C}_{k_V}(k_P')$. In particular, for any pair of \(\tau \in M^{\mathcal C}_{k_V}(k_P)\) and \(\tau' \in M^{\mathcal C}_{k_V}(k_P')\) there exist $\hat\tau\in M^{\mathcal C}_{k_V}(k_P)$ and $\hat\tau'\in M^{\mathcal C}_{k_V}(k_P')$ such that
\(
\hat\tau \ \succeq_{cx}\ \hat\tau'.
\)
\end{proposition}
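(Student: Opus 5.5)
This is a direct application of Topkis' monotone selection theorem on the chain $\mathcal C$. Proposition~\ref{lem:DD_kP_motiv} supplies the decreasing differences; supermodularity in $\tau$ is automatic on a chain, so what remains is checking the strong set order by the standard exchange argument. Parametrize $\mathcal C$ by $\lambda\in[0,\bar\lambda]$. Set $V(\lambda\mid k_P):=U_{k_V}(\tau(\lambda)\mid k_P)$. Since $\tau(\lambda')\succeq_{cx}\tau(\lambda)$ iff $\lambda'\ge\lambda$, the lattice operations on $\mathcal C$ are $\max$ and $\min$ of $\lambda$. Proposition~\ref{lem:DD_kP_motiv}, restricted to $\mathcal C$, says that for $k_P'>k_P$ and $\lambda'\ge\lambda$,
\[
V(\lambda'\mid k_P')-V(\lambda\mid k_P')\le V(\lambda'\mid k_P)-V(\lambda\mid k_P).
\]
If one wants to verify this directly, it is immediate: the difference of the two sides equals $-(k_P'-k_P)\big(\E_{\tau(\lambda')}\kappa-\E_{\tau(\lambda)}\kappa\big)\le0$ by convexity of $\kappa$.

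Now take $\tau=\tau(\lambda)\in M^{\mathcal C}_{k_V}(k_P)$ and $\tau'=\tau(\mu)\in M^{\mathcal C}_{k_V}(k_P')$. We must show $\tau(\max\{\lambda,\mu\})\in M^{\mathcal C}_{k_V}(k_P)$ and $\tau(\min\{\lambda,\mu\})\in M^{\mathcal C}_{k_V}(k_P')$. If $\lambda\ge\mu$ the join is $\tau$ and the meet is $\tau'$, so there is nothing to prove. Otherwise $\mu>\lambda$. Optimality of $\mu$ at $k_P'$ gives $V(\mu\mid k_P')-V(\lambda\mid k_P')\ge0$. Decreasing differences then give $V(\mu\mid k_P)-V(\lambda\mid k_P)\ge0$, so $\mu$ is also optimal at $k_P$; that is, the join $\tau(\mu)\in M^{\mathcal C}_{k_V}(k_P)$. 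Symmetrically, optimality of $\lambda$ at $k_P$ gives $V(\mu\mid k_P)-V(\lambda\mid k_P)\le0$. Combined with the previous inequality this forces equality at $k_P$. Decreasing differences then yield $V(\mu\mid k_P')-V(\lambda\mid k_P')\le0$, hence $\lambda$ is optimal at $k_P'$ and the meet $\tau(\lambda)\in M^{\mathcal C}_{k_V}(k_P')$. This establishes $M^{\mathcal C}_{k_V}(k_P)\SSO M^{\mathcal C}_{k_V}(k_P')$.

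For the ``in particular'' clause, given $\tau\in M^{\mathcal C}_{k_V}(k_P)$ and $\tau'\in M^{\mathcal C}_{k_V}(k_P')$, take $\hat\tau:=\tau\vee\tau'$ and $\hat\tau':=\tau\wedge\tau'$. These lie in the respective argmax sets by the step above, and $\hat\tau\succeq_{cx}\hat\tau'$ trivially, since join dominates meet on a chain. Nonemptiness of the argmax sets is presupposed by the statement ("for any pair"); it would also follow from upper semicontinuity of $\lambda\mapsto V(\lambda\mid k_P)$ on the compact interval, using that $\phi_{k_V}$ is u.s.c. as a maximum of the u.s.c. step payoff. However, it is not needed.

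The only delicate point is the exchange step in the case $\mu>\lambda$. It uses weak inequalities only, so no strictness in the decreasing-differences property is required. I do not expect a real obstacle beyond carefully matching the paper's orientation of $\SSO$: $A$ must be closed under joins and $B$ under meets.
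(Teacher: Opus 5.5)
Your proof is correct, and it uses the same ingredients as the paper: decreasing differences in $(\lambda,k_P)$ from Proposition~\ref{lem:DD_kP_motiv}, an exchange argument that plays the two optimality inequalities against decreasing differences, and join/meet for the ``in particular'' clause. The difference is in how the strong set order is obtained. You verify its definition directly for an arbitrary pair of maximizers $\lambda$ at $k_P$ and $\mu$ at $k_P'$. The paper instead writes $\Lambda(k_P)\subseteq[0,\bar\lambda]$ for the argmax set and asserts that, by compactness, $\Lambda(k_P)\SSO\Lambda(k_P')$ is equivalent to $\min\Lambda(k_P)\ge\min\Lambda(k_P')$ and $\max\Lambda(k_P)\ge\max\Lambda(k_P')$. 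It then runs the exchange only at the extreme points $\underline\lambda=\min\Lambda(k_P)$ and $\overline\lambda'=\max\Lambda(k_P')$.

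Your route is the sounder one. Comparing extremes does not in general give the strong set order when argmax sets are not intervals: $A=\{1,3\}$ and $B=\{0,2\}$ satisfy both extreme inequalities, yet $\max\{1,2\}\notin A$. Moreover, the paper's purported contradiction is not one. Having derived $\overline\lambda'\in\Lambda(k_P)$ and $\underline\lambda\in\Lambda(k_P')$, nothing conflicts with the definitions of $\min$ and $\max$, and its intermediate claim $\min\Lambda(k_P)\ge\max\Lambda(k_P')$ is false whenever the two argmax sets share two or more points. That same derivation, applied to arbitrary pairs rather than extremes, is exactly your exchange step.

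Your argument also needs no compactness. The paper gets compactness from claimed continuity of $\lambda\mapsto U_{k_V}(\tau(\lambda)\mid k_P)$, which can fail: $\phi_{k_V}$ jumps up at $\pi_L$ whenever $\Delta_u(\pi_L)>k_V C(1)$. Upper semicontinuity, as you note, suffices for nonemptiness of the argmax sets.
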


Proposition~\ref{prop:MCS_set} says that costlier information pushes the optimal
experiment toward \emph{pooling}: the low posterior rises, the high posterior falls, and the
signal becomes less dispersed. Similarly, cheaper information shifts the optimal experiment towards more dispersion: the low posterior falls, the high posterior rises, and the
signal becomes more dispersed. As such, Proposition~\ref{prop:MCS_set} provides a direct link between the geometric movement of
contact points and costs of informativeness: higher $k_P$ reduces posterior
spread and lower  $k_P$ increases posterior
spread. Economically, when generating dispersion is expensive, the
principal relies more on the inner instrument (management) at beliefs close to $p_0$ and less
on pushing probability mass to extreme posteriors. By contrast, when generating the dispersion is cheaper, the
principal relies more on the outer instrument (information) by pushing the probability mass to the extreme posteriors. As a result, when the cost of information changes, management and information behave as ``substitutes''.

\subsubsection{Comparative statics in the management-cost parameter}

Fix $k_P$ and define
\(
U_{k_P}(\tau\mid k_V)
:=\int_0^1 \phi_{k_V}(p)\,\tau(dp)-k_P\int_0^1 \kappa(p)\,\tau(dp),
\;\tau\in\mathcal T_2
\). Let $M^{\mathcal C}_{k_P}(k_V):=\arg\max_{\tau\in \mathcal C} U_{k_P}(\tau \mid k_V)$ denote the set of maximizers of the outer problem restricted to the set $\mathcal C$. 

\emph{Complements vs substitutes.}
We say that information and management are \emph{set-valued complements} on $\mathcal C$ if
\(
k_V'<k_V\ \Longrightarrow\ M^{\mathcal C}_{k_P}(k_V') \ \SSO\ M^{\mathcal C}_{k_P}(k_V),
\)
and \emph{set-valued substitutes} if the reverse order holds:
\(
k_V'<k_V\ \Longrightarrow\ M^{\mathcal C}_{k_P}(k_V) \ \SSO\ M^{\mathcal C}_{k_P}(k_V').
\)

We impose an approximation requirement:

\begin{enumerate}[label=(R),leftmargin=18pt]
\item \emph{Richness for maximizers.} For any $\tau',\tau\in\mathcal C$ with
$\tau'\succeq_{cx}\tau$ and any $k_V'<k_V$, there exist sequences $k_{V,n}\to k_V$ and
$k_{V,n}'\to k_V'$ with $k_{V,n}'<k_{V,n}$ and selections
\(
\sigma_n\in M^{\mathcal C}_{k_P}(k_{V,n}),\; \sigma_n'\in M^{\mathcal C}_{k_P}(k_{V,n}'),
\)
such that $\sigma_n\to\tau$ and $\sigma_n'\to\tau'$ in $\mathcal C$.
\end{enumerate}

For $k_V'<k_V,\;$let
$
\Delta\phi_{k_V',k_V}(p):=\phi_{k_V'}(p)-\phi_{k_V}(p)
$ denote the \emph{gain from cheaper management}.
Since $k_P$ is fixed, the information-cost term cancels in differences across $k_V$:
\[
U_{k_P}(\tau\mid k_V')-U_{k_P}(\tau\mid k_V)
=\int_0^1 \Delta\phi_{k_V',k_V}(p)\,\tau(dp).
\]

Assume that $c_V$ is continuous and so $\Delta\phi_{k_V',k_V}(p)$ is continuous in $(k_V',k_V)$. The following result characterizes when information and management are complements or substitutes.

\begin{proposition}\label{prop:comp_subs_necessity}
The following are equivalent:
\begin{enumerate}[label=\textup{(\roman*)},leftmargin=22pt]
\item \emph{Complementarity (resp., substitutability):} for all $k_V'<k_V$,
\[
M^{\mathcal C}_{k_P}(k_V') \ \SSO\ M^{\mathcal C}_{k_P}(k_V)
\qquad
(\text{resp., } M^{\mathcal C}_{k_P}(k_V)\ \SSO\ M^{\mathcal C}_{k_P}(k_V')).
\]
\item \emph{Increasing (resp., decreasing) differences:} for all $k_V'<k_V$ and all
$\tau'\succeq_{cx}\tau$ in $\mathcal C$,
\[
\int \Delta\phi_{k_V',k_V}\,d\tau'
\ \ge\
\int \Delta\phi_{k_V',k_V}\,d\tau \qquad \Big(\text{resp.,}
\int \Delta\phi_{k_V',k_V}\,d\tau'
\ \le\
\int \Delta\phi_{k_V',k_V}\,d\tau. \Big)
\]
\end{enumerate}

\end{proposition}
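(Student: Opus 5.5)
The argument has two directions, and I treat only the complementarity case; the substitutability case is symmetric, obtained by reversing the order on $k_V$ (equivalently, replacing $\Delta\phi_{k_V',k_V}$ by its negative). Throughout I use the identity $U_{k_P}(\tau\mid k_V')-U_{k_P}(\tau\mid k_V)=\int\Delta\phi_{k_V',k_V}\,d\tau$. On the chain $\mathcal C$ we have $\tau\vee\tau'=\tau(\max\{\lambda,\lambda'\})$ and $\tau\wedge\tau'=\tau(\min\{\lambda,\lambda'\})$, so every objective on $\mathcal C$ is automatically supermodular. The only nontrivial ingredient is therefore single-crossing, which here amounts to increasing differences in $(\tau,-k_V)$.

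\emph{(ii)$\Rightarrow$(i).} This is Topkis' monotonicity theorem on a chain. Fix $k_V'<k_V$, take $\tau'\in M^{\mathcal C}_{k_P}(k_V')$ and $\tau\in M^{\mathcal C}_{k_P}(k_V)$, and write $\sigma_1=\tau\vee\tau'$ and $\sigma_2=\tau\wedge\tau'$. If $\tau'\succeq_{cx}\tau$, the join is $\tau'$ and the meet is $\tau$, so there is nothing to prove. Otherwise $\tau\succeq_{cx}\tau'$, so $\sigma_1=\tau$ and $\sigma_2=\tau'$, and we must show $\tau\in M(k_V')$ and $\tau'\in M(k_V)$. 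Optimality of $\tau$ at $k_V$ gives $U(\tau\mid k_V)\ge U(\tau'\mid k_V)$. Adding the inequality from (ii), $\int\Delta\phi\,d\tau\ge\int\Delta\phi\,d\tau'$ (applied with $\tau$ the more dispersed policy), yields $U(\tau\mid k_V')\ge U(\tau'\mid k_V')$, so $\tau\in M(k_V')$. Reading the same two inequalities in the opposite direction, optimality of $\tau'$ at $k_V'$ gives $U(\tau'\mid k_V)\ge U(\tau\mid k_V)$, so $\tau'\in M(k_V)$.

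\emph{(i)$\Rightarrow$(ii).} This direction is the main obstacle, because the strong set order only constrains policies that are actually maximizers; condition (R) is what bridges this gap. Fix $\tau'\succeq_{cx}\tau$ in $\mathcal C$ and $k_V'<k_V$. By (R), choose $k_{V,n}'<k_{V,n}$ converging to $(k_V',k_V)$, with $\sigma_n\in M(k_{V,n})$, $\sigma_n\to\tau$, and $\sigma_n'\in M(k_{V,n}')$, $\sigma_n'\to\tau'$.

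By (i), $\sigma_n\vee\sigma_n'\in M(k_{V,n}')$ and $\sigma_n\wedge\sigma_n'\in M(k_{V,n})$. Optimality then gives two inequalities:
\[
U(\sigma_n\vee\sigma_n'\mid k_{V,n}')\ge U(\sigma_n\mid k_{V,n}')
\]
and
\[
U(\sigma_n\wedge\sigma_n'\mid k_{V,n})\ge U(\sigma_n'\mid k_{V,n}).
\]
On a chain, $\{\sigma_n\vee\sigma_n',\sigma_n\wedge\sigma_n'\}=\{\sigma_n,\sigma_n'\}$. Consider the orientation $\sigma_n'\succeq_{cx}\sigma_n$. Then the two inequalities read $U(\sigma_n'\mid k_{V,n}')\ge U(\sigma_n\mid k_{V,n}')$ and $U(\sigma_n\mid k_{V,n})\ge U(\sigma_n'\mid k_{V,n})$. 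Combining them with the fact that $\sigma_n'$ and $\sigma_n$ are maximizers at their own parameters, both sets of differences become equalities. In particular $\sigma_n'$ and $\sigma_n$ are both maximal at both parameters.

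From this tie one would like to extract $\int\Delta\phi\,d\sigma_n'\ge\int\Delta\phi\,d\sigma_n$. However, equal maximal values alone do not give a strict ordering of the difference integrals. My plan is to use the equalities together with the maximizer property: $\int\Delta\phi_{k_{V,n}',k_{V,n}}\,d\sigma_n'-\int\Delta\phi_{k_{V,n}',k_{V,n}}\,d\sigma_n$ equals a difference of maximal values evaluated at the two parameters, which is $0$, hence $\ge0$. If $\sigma_n\succ_{cx}\sigma_n'$ holds along a subsequence, one argues similarly, or notes that in the limit $\tau'\succeq_{cx}\tau$ forces the orientation eventually unless $\tau=\tau'$, in which case (ii) is trivial.

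Finally, pass to the limit. I use continuity of $\Delta\phi_{k_V',k_V}$ in $(k_V',k_V)$ (from continuity of $c_V$) and continuity in $p$ on $[0,1]$. This makes $(\sigma,k',k)\mapsto\int\Delta\phi_{k',k}\,d\sigma$ jointly continuous under the weak topology, since the integrands are bounded and continuous, so the inequality carries over to $\tau',\tau$. I expect the delicate points to be exactly this limit passage and the bookkeeping of the chain orientation for $\sigma_n,\sigma_n'$.
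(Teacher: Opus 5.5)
\paragraph{Verdict.} Your (ii)$\Rightarrow$(i) argument is correct. It is the same single-crossing argument as the paper's: you add (ii) to one optimality condition and then use the other, while the paper adds both optimality conditions and forces equalities. The gap is in (i)$\Rightarrow$(ii), at the crux of that direction.

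\paragraph{The false step.} In the orientation $\sigma_n'\succeq_{cx}\sigma_n$ you assert that ``both sets of differences become equalities,'' so that $\sigma_n$ and $\sigma_n'$ are maximal at both parameters. Nothing forces this.
\begin{itemize}
\item In that orientation $\sigma_n\vee\sigma_n'=\sigma_n'$ and $\sigma_n\wedge\sigma_n'=\sigma_n$. So (i) only returns the memberships $\sigma_n'\in M^{\mathcal C}_{k_P}(k_{V,n}')$ and $\sigma_n\in M^{\mathcal C}_{k_P}(k_{V,n})$ that you started with.
\item The two inequalities you display are just those optimality conditions again. There is no reverse inequality to create a tie: $\sigma_n'$ strictly best at $k_{V,n}'$ and $\sigma_n$ strictly best at $k_{V,n}$ is fully consistent with (i).
\item Hence the claim that the difference of integrals ``equals a difference of maximal values, which is $0$'' is unsupported.
\end{itemize}
Ties are forced only in the orientation you set aside. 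If $\sigma_n\succeq_{cx}\sigma_n'$, then (i) gives $\sigma_n\in M^{\mathcal C}_{k_P}(k_{V,n}')$ and $\sigma_n'\in M^{\mathcal C}_{k_P}(k_{V,n})$.

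\paragraph{The fix.} The missing step is elementary and is exactly the paper's.
\begin{itemize}
\item Add the two inequalities you already have, $U(\sigma_n'\mid k_{V,n}')\ge U(\sigma_n\mid k_{V,n}')$ and $U(\sigma_n\mid k_{V,n})\ge U(\sigma_n'\mid k_{V,n})$, and regroup by policy.
\item By the difference identity this is literally $\int\Delta\phi_{k_{V,n}',k_{V,n}}\,d\sigma_n'\ge\int\Delta\phi_{k_{V,n}',k_{V,n}}\,d\sigma_n$.
\item This revealed-preference inequality holds for any selections from the two argmax sets, in either orientation. So neither your case split nor the strong set order in (i) is needed at this step. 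Condition (R) plus optimality carries this direction, and then your limit argument applies.
\end{itemize}

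\paragraph{A caution on the limit.} Continuity of $\phi_{k_V}$ in $p$ on $[0,1]$ is not implied by the primitives. For $p<\pi_L$ no $q$ flips the action, while at $p=\pi_L$ the choice $q=1$ does. So $\phi_{k_V}$ jumps up at $\pi_L$ whenever $\Delta_u(\pi_L)>k_V C(1)$, and $\Delta\phi_{k_V',k_V}$ can jump there too. The paper's proof appeals to the same continuity, so this is not a point where you diverge from it, but it is not automatic.
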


Proposition~\ref{prop:comp_subs_necessity} connects a set-valued
notion of complementarity/substitutability to a checkable
condition on the marginal value of cheaper management across posteriors.  On $\mathcal C$,
informativeness is totally ordered by $\cx$, so complementarity means that lowering $k_V$ shifts the
entire optimal set of experiments outward in the strong set order (more dispersed posteriors
become optimal), whereas substitutability means the opposite. Part (ii) pinpoints the mechanism:
if $\Delta\phi_{k_V',k_V}$ is larger under more dispersed posterior distributions (i.e., its
expectation is increasing in $\cx$), then experiments that generate more extreme posteriors benefit
more from cheaper management, so the principal optimally pairs cheaper management with more
informative signals (complements). If instead $\Delta\phi_{k_V',k_V}$ is relatively concentrated on
moderate posteriors, then cheaper management mainly raises the payoff of less informative
experiments, producing substitutes. Practically, the proposition is useful because it lets
one diagnose complementarity/substitutability by studying the monotonicity of
$\int \Delta\phi_{k_V',k_V}\,d\tau$ on $\mathcal C$ rather than solving the full outer
problem for every $k_V$, and it remains valid even when the outer optimum is not unique since it
works directly with the maximizer correspondence and the strong set order.

\emph{Discussion.} In many applications the principal cannot freely choose an arbitrary signal structure, but can
instead tune the \emph{intensity} (precision, resolution, or thoroughness) of a fixed evidence technology.
The set $\mathcal C=\{\tau(\lambda):\lambda\in[0,\bar\lambda]\}$ formalizes this idea in the binary-state,
two-posterior domain: the endpoints $a<p_0<b$ represent the most extreme posteriors attainable from
the two canonical messages (``bad news'' and ``good news''), while the scalar $\lambda$ scales how
decisive each message is. At $\lambda=0$ the policy is uninformative (pooling at $p_0$); as $\lambda$
increases the two posteriors move monotonically away from $p_0$ along fixed directions, generating
greater dispersion and informativeness. The weight $\alpha$ is held fixed so Bayes plausibility is
satisfied for every $\lambda$ and the frequency of the two messages is not itself a design choice along
the chain. As such, it is
empirically meaningful to compare policies by a single informativeness index; on $\mathcal C$ this index
coincides with the convex order, so dispersion and informativeness are aligned.

\subsubsection{Direct vs. contact-set channels}\label{subsubsec:431}

In this subsection, we discuss how comparative statics of the optimal information policy $\tau^*(k_P,k_V)$ can be decomposed into two logically distinct mechanisms.
The key observation is that $\tau^*$ is not chosen to maximize $g_{k_P,k_V}(p)$ pointwise in $p$, but rather to maximize its \emph{Bayes--plausible expectation}
via concavification (Lemma~\ref{thm:concav}).
As a result, changes in $(k_P,k_V)$ affect $\tau^*$ both through how payoffs change at each posterior and through how the \emph{supporting geometry} changes.

\emph{The direct (pointwise) channel.}
Fix any feasible information policy $\tau$ with mean $p_0$.
A change in $k_V$ affects the principal's payoff under this fixed $\tau$ only through the function $g_{k_P,k_V}$:
\(
\E_{\tau}\!\big[g_{k_P,k_V}(p)\big]
=
\int_0^1 \Big(\phi_{k_V}(p)-k_P\kappa(p)\Big)\,\tau(dp).
\)
By Lemma~\ref{thm:bangbang}, $\phi_{k_V}(p)$ is weakly decreasing in $k_V$ for each $p$, and management at $p\in(\pi_L,\pi_H)$ switches off once $k_V$ exceeds the posterior-specific break-even level.
Thus, \emph{for any fixed $\tau$}, increasing $k_V$ weakly reduces the contribution of those posteriors where management is used, typically compressing payoff differences across posteriors.
Similarly, increasing $k_P$ lowers $g_{k_P,k_V}(p)$ more in the tails because $\kappa$ is convex.
The direct channel therefore captures the most immediate intuition:
\emph{holding the information structure fixed}, higher $k_V$ makes management less valuable and higher $k_P$ makes dispersion more expensive, both of which tend to reduce the gain from informative signals.

\emph{The contact-set (supporting-line) channel.}
While the direct channel holds $\tau$ fixed, the optimal policy $\tau^{\ast}(k_P,k_V)$ is
chosen by solving
$
\max_{\tau:\,\E_\tau[p]=p_0}\ \E_\tau\!\big[g_{k_P,k_V}(p)\big].
$
In the binary-state case, Lemma~\ref{thm:concav} implies
$
U(k_P,k_V)=(\cav g_{k_P,k_V})(p_0)
\;\text{and}\;
\supp(\tau^{\ast})\subseteq\{p_-(k_P,k_V),\,p_+(k_P,k_V)\},
$
where $\{p_-,p_+\}$ are the contact points at which a supporting chord of $\cav g_{k_P,k_V}$
through $p_0$ touches $g_{k_P,k_V}$. Equivalently, $(p_-,p_+)$ are the posteriors that
implement the optimal concavifying mixture at the prior. Crucially, when parameters change it is not only the level of $g_{k_P,k_V}$ that moves; its
\emph{shape} can change. Changes in $k_V$ affect the inner value $\phi_{k_V}$: increasing
$k_V$ shrinks the region where management is used (Corollary~\ref{cor:phat}) and can create or remove
kinks in $\phi_{k_V}(\cdot)$ at posteriors where the optimal management switches between $0$
and $q_{\min}(p)$. Changes in $k_P$ reshape $g_{k_P,k_V}$ in a different way: since
$g_{k_P,k_V}(p)=\phi_{k_V}(p)-k_P\kappa(p)$ and $\kappa$ is convex, increasing $k_P$ scales a
convex penalty and thereby alters the global curvature of the effective payoff (often making
it more concave and compressing relative payoffs away from the prior). Because concavification
is determined by global curvature and kinks, small changes in either $k_V$ or $k_P$ can
trigger \emph{contact-set reselection}, causing the supporting chord at $p_0$ to switch to a
different pair of tangency points. Hence $(p_-(k_P,k_V),p_+(k_P,k_V))$---and thus $\tau^{\ast}$---can move
discontinuously even when primitives vary smoothly, with possible jumps between interior
contact pairs, boundary contact points, and pooling.

\emph{Discussion.}
The direct channel is typically monotone: higher $k_V$ lowers $\phi_{k_V}(p)$ pointwise and higher $k_P$ penalizes dispersion pointwise through convex $\kappa$.
If $\tau$ were exogenously fixed, informativeness could only become less attractive.
However, the contact-set channel can generate retargeting and jumps:
as $k_V$ increases, the principal may abandon moderately informative posteriors and instead choose rarer but more decisive posteriors (more dispersion),
or may compress posteriors toward the prior (less dispersion), depending on how the supporting chord re-selects its tangency points, as we characterized in Proposition \ref{prop:comp_subs_necessity}. Appendix \ref{app:worked_examples} provides numerical examples which illustrate non-monotone changes in informativeness as $k_V$ changes. On the other hand, higher $k_P$ reduces posterior spread since the principal relies more on the inner instrument (management) at beliefs close to $p_0$ and less on pushing probability mass to extreme posteriors (information), as we showed in Proposition \ref{prop:MCS_set}. 

\subsection{Interdependence between information and management}
The framework emphasizes a two-way interaction between information acquisition and bias management. In this subsection, we summarize the main regimes in information and management, derive a simple two-threshold characterization, and discuss regulatory implications through the perspective of interdependence.

\subsubsection{Qualitative regimes and a limiting benchmark}
A useful empirical prediction is that management matters only when posteriors fall in the disagreement region, and information is valuable mainly insofar as it generates such posteriors and separates beliefs into regions where different actions are optimal. Outside that region---when posteriors are already extreme or when management is either very cheap or relatively infeasible---one of the instruments becomes redundant and the other dominates the design problem. The possible regimes in the conflict region $(\pi_L,\pi_H)$ can be summarized as follows. 

\emph{Low $k_V$.}
Management can cheaply correct the delegated decision at posteriors near $p_0$, shrinking the concavification gain from splitting beliefs.
Pooling $\tau^*=\delta_{p_0}$ can therefore be optimal even when informative signals would otherwise be valuable.

\emph{Intermediate $k_V$.}
As $k_V$ rises, management becomes too expensive at some posteriors.
The principal may then use information to create posteriors at which either (i) little management is needed or (ii) the action changes without management.
In this region, $\tau^*$ is informative and the realized posteriors $(p_-,p_+)$ can \emph{retarget} (sometimes discontinuously) as $k_V$ varies,
because the supporting chord defining $\cav g$ at $p_0$ can switch contact points.

Since management is chosen after the posterior is realized, the relevant choices are \(q^*(p_-;k_V)\) and \(q^*(p_+;k_V)\).
Corollary~\ref{cor:phat} implies $q^*(p_\pm;k_V)>0 \;\text{iff}\; p_\pm\ge \hat p(k_V),$ when \(p_\pm\in(\pi_L,\pi_H)\); if \(p\notin(\pi_L,\pi_H)\), management is irrelevant. Thus, the use of management in the conflict region after an informative signal can be classified as: (i) management at both posteriors (\(p_-\ge \hat p(k_V)\)); (ii) management only at the high posterior (\(p_-<\hat p(k_V)\le p_+\)); (iii) no management (\(p_+<\hat p(k_V)\)).

\emph{High $k_V$.}
Beyond a (possibly infinite) level $k_V^{NM}$, management is never optimal at any posterior in $(\pi_L,\pi_H)$,
so the problem reduces to a no-management persuasion problem with cost $k_P\,\kappa(.)$.
Depending on $k_P$ and primitives, $\tau^*$ may be informative or pooling in this limiting regime. 

\subsubsection{Two thresholds in the management cost}\label{subsubsec:regimes_two_thresholds}

We now present a simple two-threshold result that supports the emergence of the regimes described above: one threshold at which information first becomes optimal and one threshold beyond which management is never optimal at any posterior. Fix $k_P>0$.

We assume the following four conditions: (A1)  $C$ is bounded from above and $k_V\in[0,k]$ for some $k > 0$; (A2) Pooling is optimal at $k_V=0$, i.e.\
$(\cav g_{k_P,0})(p_0)=g_{k_P,0}(p_0)$; (A3) There exists some $\hat k_V\in[0,\bar k]$ such that informativeness is optimal at $\hat k_V$, i.e.\
$(\cav g_{k_P,\hat {k}_V})(p_0)>g_{k_P,\hat {k}_V}(p_0)$; (A4) For all $p\in[0,1]$, the break-even cost $\bar k_V(p)$ is bounded by $k>0$.

Let $k_V^{\mathrm{ON}}(k_P)\ :=\ \inf\Big\{k_V \in [0,\bar k]:\ (\cav g_{k_P,k_V})(p_0)>g_{k_P,k_V}(p_0)\Big\}$ be the \emph{information-on threshold} and $k_V^{NM}:=\sup_{p\in(\pi_L,\pi_H)} \bar k_V(p) \in[0,k]$ denote the global \emph{no-management threshold}. By (A4), $k_V^{NM}\leq k$. \bigskip  

Suppose (A1)--(A4) hold. \medskip

\begin{proposition}\label{thm:two_thresholds}
The following are true:
\begin{enumerate}[label=\textup{(\roman*)},leftmargin=22pt]
\item If \(k_V<k_V^{\mathrm{ON}}(k_P)\), pooling is optimal.
\item For every \(\varepsilon>0\), some \(k_V\in\big(k_V^{\mathrm{ON}}(k_P),\,k_V^{\mathrm{ON}}(k_P)+\varepsilon\big)\) admits an informative optimum.
\item If \(k_V\ge k_V^{NM}\), then \(q^*(p;k_V)=0\) for all \(p\in[0,1]\).
\item If \(k_V<k_V^{NM}\), then \(q^*(p;k_V)>0\) for some \(p\in(\pi_L,\pi_H)\).
\end{enumerate}
\end{proposition}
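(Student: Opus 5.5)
The plan is to treat the four parts separately, since (i)--(ii) concern the information-on threshold and follow from the definition of an infimum, while (iii)--(iv) concern the no-management threshold and follow from Lemma~\ref{thm:bangbang} and Proposition~\ref{lem:kbarMon}.

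For (i), take $k_V<k_V^{\mathrm{ON}}(k_P)$. By the definition of $k_V^{\mathrm{ON}}$ as an infimum, the strict inequality $(\cav g_{k_P,k_V})(p_0)>g_{k_P,k_V}(p_0)$ fails at $k_V$. Since $\cav g\ge g$ always, this gives equality. Part~3 of Lemma~\ref{thm:concav} then says pooling $\tau^*=\delta_{p_0}$ is optimal.

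For (ii), (A3) guarantees that the set defining $k_V^{\mathrm{ON}}$ is nonempty, so the infimum is finite. By the characterization of an infimum, for every $\varepsilon>0$ there is some $k_V$ in the set lying in $[k_V^{\mathrm{ON}},k_V^{\mathrm{ON}}+\varepsilon)$, and at that point informativeness is strictly optimal.
\begin{itemize}
\item If the infimum is not attained, this $k_V$ lies in the open interval, as claimed.
\item If it is attained, I would need points strictly above it. The natural route is to show that the set of informative $k_V$ is open, for instance via continuity of $k_V\mapsto(\cav g_{k_P,k_V})(p_0)-g_{k_P,k_V}(p_0)$. This continuity holds because $\phi_{k_V}$ is a maximum of functions affine in $k_V$, with $C$ bounded by (A1), so $\phi_{k_V}$ is uniformly Lipschitz in $k_V$ and so is its concave envelope.
\end{itemize}
Openness then makes a neighborhood of the attained point informative. (A2) ensures $k_V^{\mathrm{ON}}\ge 0$ is not a trivial case: at $0$ itself pooling holds. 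I expect this step, handling attainment and the open-interval phrasing, to be the main delicate point. Continuity in $k_V$ is the tool I would try first.

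For (iii), take $k_V\ge k_V^{NM}=\sup_p\bar k_V(p)$. Then for every $p\in(\pi_L,\pi_H)$ we have $k_V\ge\bar k_V(p)$, which means $\Delta_u(p)\le k_VC(q_{\min}(p))$, and Lemma~\ref{thm:bangbang} selects $q^*(p;k_V)=0$. The case $\bar k_V(p)=+\infty$ is excluded by (A4). At $p=\pi_L$, $q_{\min}=1$, and the comparison is handled either by continuity or by the same lemma's rule. Outside $[\pi_L,\pi_H)$, Lemma~\ref{thm:bangbang} already gives $q^*=0$.

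For (iv), take $k_V<k_V^{NM}$. By the definition of the supremum there is some $p\in(\pi_L,\pi_H)$ with $\bar k_V(p)>k_V$, so $\Delta_u(p)>k_VC(q_{\min}(p))$. Lemma~\ref{thm:bangbang} then gives $q^*(p;k_V)=q_{\min}(p)>0$, since $q_{\min}\in(0,1)$ on the interior.

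The monotonicity in Proposition~\ref{lem:kbarMon} and Corollary~\ref{cor:phat} is not needed for (iii)--(iv). It does show that the management region $[\hat p(k_V),\pi_H)$ shrinks to empty exactly at $k_V^{NM}$, which is a useful consistency check.
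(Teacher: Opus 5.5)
Your proposal is correct and follows essentially the same route as the paper. Part (i) comes from the infimum definition plus Lemma~\ref{thm:concav}; part (ii) combines the infimum with continuity of the concavification gap, which follows from the $C(1)$-Lipschitz dependence of $\phi_{k_V}$ on $k_V$ under (A1); and parts (iii)--(iv) come from the bang--bang rule of Lemma~\ref{thm:bangbang} together with the definition of $k_V^{NM}$ as a supremum. You are in fact slightly more careful than the paper in two places. First, the paper proves continuity of the gap in Steps 2--3, but its contradiction argument for (ii) invokes only the infimum and passes over the case where the infimum is attained; you use continuity exactly there, and continuity together with (A2) actually rules that case out. Second, you flag the boundary posterior $\pi_L$, which the paper's Step 5 glosses over; your continuity route there, $\bar k_V(p)\to\Delta_u(\pi_L)/C(1)$ as $p\downarrow\pi_L$, is valid.
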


The threshold $k_V^{\mathrm{ON}}(k_P)$ is a \emph{first-on} threshold: it guarantees that pooling is optimal below it and that
informativeness is optimal above it for some $k_V$, under (A1)--(A4).
It does \emph{not} assert that informativeness is monotone in $k_V$ thereafter; further increases in $k_V$ can still induce re-targeting of contact points
and, depending on $k_P$, can eventually restore pooling. In contrast, $k_V^{NM}$ is a global threshold: it is the maximal ``break-even'' management cost across posteriors in the disagreement region $(\pi_L,\pi_H)$.
When $k_V$ exceeds this level, even the most favorable posterior for management (highest gain relative to required intervention cost) cannot justify intervention, so the principal optimally abandons bias management entirely. Appendix \ref{app:worked_examples} examples illustrate the thresholds stated in Proposition \ref{thm:two_thresholds} with a finite and an infinite global threshold $k_V^{NM}$.

\subsubsection{Application: subsidizing information vs subsidizing management}\label{subsec:regulator}

Suppose a regulator can influence the
principal's ``steering technology'' by reducing either the cost of information acquisition
($k_P$) or the cost of bias management ($k_V$), but can subsidize only one parameter due to
budgetary or institutional constraints. The regulator's objective is to facilitate more
steering---either by increasing informativeness of the principal's signal, by increasing
bias-management effort, or by balancing the two---net of the cost of intervention. Because
information and management interact through the posterior distribution, the choice of which
parameter to subsidize depends both on the regulator's goal and on whether the principal's
instruments behave as \emph{complements} or \emph{substitutes} in equilibrium. What should be the optimal policy?

Two forces guide the policy ranking. First, holding $k_V$ fixed, lowering $k_P$ directly
reduces the convex penalty $k_P\kappa(\cdot)$ and therefore makes the optimal experiment weakly more informative: the optimal contact points
move outward and pooling becomes less likely. Second, holding $k_P$ fixed, lowering $k_V$
raises the posterior-by-posterior return to management and can either increase or decrease
informativeness depending on whether the model exhibits complementarity or substitutability.
In complement regions, cheaper management raises the marginal value of dispersion; in
substitute regions, cheaper management ``fixes'' choices locally and dampens the marginal
value of belief dispersion. We can classify optimal policies by regulatory objective as follows:

\emph{(i) More information.}
If the primary objective is to increase informativeness,
the robust instrument is to subsidize information acquisition (reduce $k_P$). This policy
acts directly on the outer problem by lowering the marginal cost of dispersion and, in the
binary-state case, expands the concavification gap at the prior. Subsidizing management
(reducing $k_V$) can also raise informativeness, but only in environments where information
and management are complements; in substitute environments it may reduce informativeness by
making local management a cheaper substitute for belief dispersion.

\smallskip
\emph{(ii) More management.}
If the objective is to increase bias-management effort (e.g., governance, enforcement
capacity, incentive alignment), the direct instrument is to subsidize management (reduce
$k_V$). This raises $q^*(p)$ at posteriors in the conflict region and weakly enlarges the
set of posteriors at which management is optimally used. By contrast, subsidizing information
(reducing $k_P$) may reduce the \emph{incidence} of management by shifting posterior mass
toward regions where tastes agree and management is irrelevant.

\smallskip
\emph{(iii) More information and more management.}
Here the complement/substitute classification is pivotal. If the environment is in a
\emph{complement} regime, then reducing $k_V$ can ``crowd in'' both instruments: management
intensity rises and the optimal signal becomes more informative. In contrast, in a
\emph{substitute} regime, there is an intrinsic tradeoff: reducing $k_P$ tends to raise
informativeness but can reduce management usage, whereas reducing $k_V$ tends to raise
management but can reduce informativeness. With only one policy lever, the regulator must
therefore choose which margin is more valuable (or which can be adjusted more cheaply), since
a single subsidy cannot generally raise both dimensions simultaneously when the instruments
substitute.

These recommendations are strongest when the prior lies in the conflict region
$p_0\in(\pi_L,\pi_H)$, where both instruments can matter. If $p_0\notin(\pi_L,\pi_H)$, the
agent's preferred action is typically invariant to taste at beliefs near the prior, so
bias management has little bite; in that case subsidizing $k_P$ is the natural way to
increase steering via information. More generally, the regulator's optimal intervention can
be viewed as targeting the scarcest input to steering: subsidize $k_P$ when dispersion is
valuable but too costly to generate, and subsidize $k_V$ when alignment is pivotal but too
costly to enforce.

\section{Conclusion}\label{sec:conclusion}

In this paper, we study a principal who delegates choice to a biased agent and can steer outcomes using two costly instruments:
an information policy that determines a Bayes--plausible distribution of posteriors, and a bias-management action that shifts the agent's effective taste within the convex hull of two benchmark tastes. In a binary-state, two-action specialization with posterior-separable information costs, the problem admits a clean inner--outer decomposition. Optimal bias management is bang--bang: at each posterior, the principal either does nothing or applies the minimal intervention needed to induce the desired action. The information problem reduces to concavification of an endogenous posterior value function that already incorporates the management option and net information costs, implying that optimal signals use at most two posteriors.

The framework clarifies how information acquisition and bias management are interrelated:
information changes which posteriors occur and thus where management is exercised, while management reshapes the curvature and kinks of the posterior value function and thus the marginal value of information. This feedback implies that the natural benchmark of \emph{monotone informativeness} need not hold. As $k_V$ changes, the inner management problem can switch regimes, reshaping $g_{k_V,k_P}(p)=\phi_{k_V}(p)-k_P\kappa(p)$; the resulting concavification may then reselect its contact set at the prior, generating discrete jumps in optimal posteriors and episodes of inward or outward ``retargeting.'' When regime switching in the inner problem creates kinks or
nonconvexities in $\Delta\phi$, the model can generate mixtures of complementarity and
substitution across parameter ranges, consistent with the nonmonotone patterns observed in
our numerical examples given in Appendix \ref{app:worked_examples} below. As such, when considering policy-implications of the two-layer model, understanding the nature of interaction between information and management--whether they are complements or substitutes--is crucial to target the right lever for an efficient policy making. Overall, the two-layer delegation framework is both empirically natural and theoretically rich, yielding comparative-statics patterns that are relevant for policy analysis of belief-based and rule-based steering tools.

\appendix

\section{Appendix}
\subsection{Numerical examples}\label{app:worked_examples}
In this section,  we collect two numerical examples in the binary-state, two-action, convex-hull taste specialization.

In both examples, states are $S=\{H,L\}$ with prior $p_0=0.5$, and cutoff posteriors $\pi_L=0.3$ and $\pi_H=0.7$. The principal chooses an information policy $\tau\in\Delta_0([0,1])$ and a posterior-contingent
bias-management intensity $q\in[0,1]$.
The binary-state outer problem is solved by concavification of $g_{k_P,k_V}(p)$ and an optimal $\tau^*(k_V)$ can be taken to have support size at most two:
\(
\tau^*(k_V)=\alpha(k_V)\,\delta_{p_-(k_V)}+\big(1-\alpha(k_V)\big)\,\delta_{p_+(k_V)},
\;
\alpha(k_V)=\frac{p_+(k_V)-p_0}{p_+(k_V)-p_-(k_V)}.
\)
We represent the optimal posterior spread by
\(
\mathrm{Disp}(k_V):=p_+(k_V)-p_-(k_V),\) with \(\mathrm{Disp}(k_V)=0\text{ under pooling }(p_-=p_+=p_0).\)
We measure informativeness in terms of \emph{mutual information} $I(\tau(k_V))$ in \emph{bits} for the induced two-signal experiment
supporting posteriors $\{p_-,p_+\}$ with weights $\{\alpha,1-\alpha\}$:
\(I(\tau(k_V))=H(p_0)-\big(\alpha H(p_-)+(1-\alpha)H(p_+)\big),\)
where
$H(p)=-p\log_2 p-(1-p)\log_2(1-p)$ denotes the entropy function.

Outcomes are $X=\{x,y\}$, and the principal's utility is normalized as $u(x)=1$ and $u(y)=0$.
The menu is $A=\{f,g\}$, interpreted as \emph{status quo} ($f$) and \emph{intervention} ($g$).
Acts are specified by success probabilities $\Pr(x\mid a,s)$, so $u(a)\cdot p=\Pr(x\mid a,p)$.
Each taste $v_j$ evaluates $g$ as the principal payoff net of a private cost $c_j\ge 0$: $v_j(f)\cdot p=u(f)\cdot p$ and $v_j(g)\cdot p=u(g)\cdot p-c_j$.
Thus $v_j$ chooses $g$ iff $\Delta_u(p):=u(g)\cdot p-u(f)\cdot p\ge c_j$, i.e.\ via a cutoff in $p$.
Thus for the given $\pi_L<\pi_H$, set $c_L=\Delta_u(\pi_L)$ and $c_H=\Delta_u(\pi_H)$ once the acts are specified, so that
$v_L$ chooses $g$ iff $p\ge\pi_L$ and $v_H$ chooses $g$ iff $p\ge\pi_H$.
The principal can implement an effective taste in the convex hull: $v_q=q\,v_L+(1-q)\,v_H$ for $q\in[0,1]$ at a cost $c_V(q)$. For $p\in[\pi_L,\pi_H]$, the minimal intensity that induces $g$ is
\(
q_{\min}(p)\ :=\ \frac{c_H-\Delta_u(p)}{c_H-c_L}\in[0,1],
\quad\text{so }q_{\min}(\pi_L)=1,\ q_{\min}(\pi_H)=0,
\)
extended by clipping to $[0,1]$ outside $[\pi_L,\pi_H]$. \medskip

\begin{example}[Non-monotone informativeness; single peak]\label{ex:pool_info_pool}
Let acts satisfy 
$\Pr(x\mid f,H)=0.4,\; \Pr(x\mid f,L)=0.2,\;\Pr(x\mid g,H)=0.9,\; \Pr(x\mid g,L)=0.5,$
so
$u(f)\cdot p=0.2+0.2p,\quad u(g)\cdot p=0.5+0.4p,\quad \Delta_u(p)=0.3+0.2p.$
Then $c_L=\Delta_u(\pi_L)=0.36$ and $c_H=\Delta_u(\pi_H)=0.44$, generating the cutoffs $\pi_L<\pi_H$. When used, management incurs a fixed-plus-quadratic cost according to the lower semi-continuous cost function $c_V(q)$ satisfying $c_V(0)=0$ and $c_V(q)=\varepsilon+q^2$ for $q>0$ with $\varepsilon=0.03$. 
At $p\in(\pi_L,\pi_H)$ the principal either sets $q^*(p)=0$ or $q^*(p)=q_{\min}(p)$ depending on whether
$\Delta_u(p)\ge k_V(0.03+q_{\min}(p)^2)$.
For $p\le\pi_L$ and $p\ge \pi_H$, choices are unanimous and $q^*(p)=0$. Let the information cost be posterior-separable with
$\kappa(p)=(p-p_0)^2,\; k_P=11.$

All primitives above are held fixed; only $k_V$ varies. Using numerical concavification on a fine grid of $p\in[0,1]$, the optimal contact points, weight, dispersion, informativeness, and management intensities are:

\begin{center}
\begin{tabular}{c|ccccccc}
\hline
$k_V$ & $p_-(k_V)$ & $p_+(k_V)$ & $\alpha(k_V)$ & $\mathrm{Disp}(k_V)$ & $I(\tau(k_V))$ & $q^*(p_-)$ & $q^*(p_+)$\\
\hline
$0.90$ & $0.5000$ & $0.5000$ & $1.0000$ & $0$ & $0$ & $0.5000$ & $0.5000$\\
$0.93$ & $0.3873$ & $0.5014$ & $0.0124$ & $0.1141$ & $0.0005$ & $0$      & $0.4965$\\
$2.00$ & $0.4529$ & $0.5886$ & $0.6529$ & $0.1357$ & $0.0121$ & $0$      & $0.2785$\\
$3.50$ & $0.4905$ & $0.6329$ & $0.9333$ & $0.1424$ & $0.0037$ & $0$      & $0.1677$\\
$4.03$ & $0.4997$ & $0.6419$ & $0.9977$ & $0.1423$ & $0.0001$ & $0$      & $0.1452$\\
$4.05$ & $0.5000$ & $0.5000$ & $1.0000$ & $0$ & $0$ & $0$      & $0$\\
\hline
\end{tabular}
\end{center}
\medskip

In this example there are two threshold-like switches (up to numerical tolerance):
\[
k_V^{\mathrm{ON}}\approx 0.9223
\quad\text{and}\quad
k_V^{\mathrm{OFF}}\approx 4.0304,
\]
such that $\tau^*=\delta_{p_0}$ (pooling) for $k_V<k_V^{\mathrm{ON}}$, $\tau^*$ is informative for $k_V\in(k_V^{\mathrm{ON}},\,k_V^{\mathrm{OFF}})$,
and $\tau^*=\delta_{p_0}$ again for $k_V>k_V^{\mathrm{OFF}}$.
Intuitively, for very small $k_V$ the principal prefers to avoid paying the convex information cost $k_P$ and instead relies on cheap local steering;
for intermediate $k_V$ the concavification gap becomes positive and it is optimal to create some dispersion (often combined with small management at an interior posterior);
for large $k_V$ management is effectively too expensive and the remaining informational gains do not justify $k_P$, returning the solution to pooling. As such, dispersion first starts moving upward, then downward; bias-management turns on and is exercised only at high posteriors and then gradually decreases to no management at all.
\end{example} \medskip

\begin{example}[Non-monotone informativeness; double peak]\label{ex:interior_split_pos_mgmt}
Let acts satisfy
$\Pr(x\mid f,H)=0.30,\quad \Pr(x\mid f,L)=0.20,\quad \Pr(x\mid g,H)=0.60,\quad \Pr(x\mid g,L)=0.20.$
Then 
$u(f)\cdot p = 0.20+0.10p,\quad u(g)\cdot p = 0.20+0.40p, \quad \Delta_u(p) = 0.30p.$
Then $c_L=\Delta_u(\pi_L)=0.9$ and $c_H=\Delta_u(\pi_H)=2.1$, generating the cutoffs $\pi_L<\pi_H$. Take a convex management cost $c_V(q)=k_V q^2$ with $k_V\geq0$. Let the information cost be posterior-separable with $\kappa(p)=(p-p_0)^4$ and $k_P=80.$

All primitives above are held fixed; only $k_V$ varies. Using numerical concavification on a fine grid of $p\in[0,1]$, the optimal contact points, weight, dispersion, informativeness, and management intensities are:
\begin{center}
\begin{tabular}{c|ccccccc}
\hline
$k_V$ & $p_-(k_V)$ & $p_+(k_V)$ & $\alpha(k_V)$ & $\mathrm{Disp}(k_V)$ & $I(\tau(k_V))$ & $q^*(p_-)$ & $q^*(p_+)$\\
\hline
$0.05$ & $0.5000$ & $0.5000$ & $1.0000$ & $0$ & $0$ & $1.0000$ & $1.0000$\\
$0.10$ & $0.4625$ & $0.5375$ & $0.5000$ & $0.0750$ & $0.0041$ & $0.6938$ & $0.8063$\\
$0.12$ & $0.4658$ & $0.5342$ & $0.5000$ & $0.0685$ & $0.0034$ & $0.5822$ & $0.6678$\\
$0.20$ & $0.4245$ & $0.7000$ & $0.7258$ & $0.2756$ & $0.0445$ & $0.3183$ & $0$\\
$0.40$ & $0.4115$ & $0.7000$ & $0.6932$ & $0.2885$ & $0.0522$ & $0.1543$ & $0$\\
$0.80$ & $0.4065$ & $0.7000$ & $0.6814$ & $0.2935$ & $0.0551$ & $0.0762$ & $0$\\
\hline
\end{tabular}
\end{center} \medskip

As the bias-management cost parameter $k_V$ increases, the optimal management intensity at any \emph{fixed} posterior in the conflict region $(\pi_L,\pi_H)$ declines smoothly. In the quadratic specification, the inner problem delivers
\(
q^*(p)=\min\Big\{1,\frac{\Delta_u(p)}{2k_V}\Big\}
\;\text{for }p\in(\pi_L,\pi_H),
\)
so (whenever $\Delta_u(p)>0$) one has $q^*(p)>0$ for every finite $k_V$ and $q^*(p)\downarrow 0$ only as $k_V\to\infty$ (management ``fades out'' at a fixed posterior). The information policy adjusts in the opposite direction: when $k_V$ is very small, the principal can rely on inexpensive management and pooling is optimal, yielding $I(\tau(k_V))=0$. Once $k_V$ rises, the principal begins to acquire information and creates belief dispersion, but informativeness need not increase smoothly because concavification contact points can re-select discontinuously. In our calibration, after information first turns on, mutual information exhibits a second (local) dip around $k_V\simeq 0.12$ as the optimal interior--interior split remains close to the prior and compresses slightly. A sharp upward jump then occurs when the optimal supporting chord switches: the upper contact posterior jumps to the decisive cutoff $p_+(k_V)=\pi_H$ which  is around $k_V\simeq 0.15$, producing much greater dispersion (and hence a higher $I(\tau(k_V))$). At the same time, $q^*(p_+)$ drops to zero not because the quadratic inner problem features a finite shutoff threshold at an interior posterior, but because once $p_+=\pi_H$ the preferred action is chosen even under the default taste, so bias management at the upper posterior becomes irrelevant.
\end{example}

\subsection{Proofs}\label{app:proofs}

\begin{proof}[\emph{\textbf{Proof of Proposition \ref{prop:timing}.}}]
Fix any constant $q\in[0,1]$. For every posterior $p$,
\[
\max_{\tilde q\in[0,1]}\big(B(p,\tilde q)-c_V(\tilde q)\big)\ \ge\ B(p,q)-c_V(q).
\]
Integrating both sides against any Bayes--plausible $\tau$ and subtracting $c_P(\tau)$ gives
\[
\int \max_{\tilde q}\big(B(p,\tilde q)-c_V(\tilde q)\big)\,\tau(dp)-c_P(\tau)
\ \ge\
\int B(p,q)\,\tau(dp)-c_P(\tau)-c_V(q).
\]
Maximizing the left-hand side over $\tau$ yields $U^{\mathrm{bas}}$, while maximizing the
right-hand side first over $\tau$ and then over $q$ yields $U^{\mathrm{rev}}$. Hence
$U^{\mathrm{bas}}\ge U^{\mathrm{rev}}$.
\end{proof}
\begin{proof}[\emph{\textbf{Proof of Lemma \ref{thm:bangbang}.}}]
Fix \(p\).
If \(p\le\pi_L\), then \(p\le\pi(q)\) for all \(q\), so \(\Psi(p,q)=u(f)\cdot p-k_VC(q)\), maximized at \(q=0\).
If \(p\ge\pi_H\), then \(p\ge\pi(q)\) for all \(q\), so \(\Psi(p,q)=u(g)\cdot p-k_VC(q)\), maximized at \(q=0\).
If \(p\in(\pi_L,\pi_H)\), the indicator equals \(0\) for \(q<q_{\min}(p)\) and \(1\) for \(q\ge q_{\min}(p)\).
Among choices with indicator \(0\), \(q=0\) is optimal.
Among choices with indicator \(1\), the principal minimizes \(C(q)\) subject to \(q\ge q_{\min}(p)\);
by monotonicity of \(C\) this is achieved at \(q=q_{\min}(p)\).
Comparing the two candidate values yields the stated threshold condition.
\end{proof}
\begin{proof}[\textbf{\emph{Proof of Proposition \ref{lem:kbarMon}.}}]
By assumption, \(\Delta_u(p)\) is increasing.
Since \(q_{\min}(p)\) is decreasing and by assumption \(C\) is increasing,
the denominator \(C(q_{\min}(p))\) is decreasing.
Thus \(\Delta_u(p)/C(q_{\min}(p))\) is increasing where defined; the \(+\infty\) convention handles division-by-zero cases.
\end{proof}
\begin{proof}[\textbf{\emph{Proof of Corollary \ref{cor:phat}.}}]
By Lemma~\ref{thm:bangbang}, for \(p\in(\pi_L,\pi_H)\) management is used iff \(k_V<\bar k_V(p)\).
By Proposition~\ref{lem:kbarMon}, the set \(\{p:\bar k_V(p)\ge k_V\}\) is an interval of the form \([\hat p(k_V),\pi_H]\),
implying the cutoff characterization and monotonicity.
\end{proof}
\begin{proof}[\emph{\textbf{Proof of Lemma \ref{thm:concav}.}}]
The proof follows from standard persuasion/concavification arguments.
Because \(\cav g\) is concave and dominates \(g\), Jensen's inequality implies
\[\int g\,d\tau\le \int \cav g\,d\tau\le (\cav g)(\int p\,d\tau)=(\cav g)(p_0)\] for any feasible \(\tau\).

Achievability follows by taking the supporting line to \(\cav g\) at \(p_0\), selecting its contact points \(p_-\le p_0\le p_+\),
and placing weights on \(\{p_-,p_+\}\) to match the mean \(p_0\).
Pooling is optimal exactly when there is no concavification gap at \(p_0\).
\end{proof}
\begin{proof}[\emph{\textbf{Proof of Proposition \ref{lem:DD_kP_motiv}}}]
Fix $k_P'>k_P$ and $\tau'\succeq_{cx}\tau$. Since $\kappa$ is convex and $\tau',\tau$ share
mean $p_0$, convex-order dominance implies $\int \kappa\,d\tau' \ge \int \kappa\,d\tau$.
Therefore,
\begin{align*}
\big[U_{k_V}(\tau'\mid k_P')-U_{k_V}(\tau \mid k_P')\big]-\big[U_{k_V}(\tau' \mid k_P)-U_{k_V}(\tau \mid k_P)\big]
&=(k_P-k_P')\int \kappa\,(d\tau'-d\tau)\\
&\le 0,
\end{align*} which is the desired inequality.
\end{proof}

\begin{proof}[\textbf{\emph{Proof of Proposition \ref{prop:MCS_set}}}]
Fix $k_V\ge 0$ and write
\(
\tilde U(\lambda,k_P):=U_{k_V}(\tau(\lambda)\mid k_P)\) and \(
\Lambda(k_P):=\arg\max_{\lambda\in[0,\bar\lambda]}\tilde U(\lambda,k_P).
\)
By definition of $M^{\mathcal C}_{k_V}(k_P)$ and of the chain $\mathcal C$,
\begin{equation}\label{eq:Mc_Lambda}
M^{\mathcal C}_{k_V}(k_P)=\{\tau(\lambda):\lambda\in\Lambda(k_P)\}.
\end{equation}
Since $\tau(\lambda)$ is continuous in the weak topology on $\mathcal T_2$ and
$U_{k_V}(\tau\mid k_P)$ is continuous in $(\tau,k_P)$, $\tilde U$ is continuous on the compact set
$[0,\bar\lambda]\times\mathbb R_+$. Hence $\Lambda(k_P)$ is nonempty and compact for each $k_P$.

\emph{Step 1.}
Let $k_P'>k_P$ and $\lambda'\ge \lambda$. Then $\tau(\lambda')\cx\tau(\lambda)$ by construction of the
chain. Applying Proposition~\ref{lem:DD_kP_motiv} with $\tau'=\tau(\lambda')$ and $\tau=\tau(\lambda)$ yields
\begin{equation}\label{eq:DD_lambda_kP}
\tilde U(\lambda',k_P')-\tilde U(\lambda,k_P')
\ \le\
\tilde U(\lambda',k_P)-\tilde U(\lambda,k_P),
\end{equation}
i.e.\ $\tilde U$ has decreasing differences in $(\lambda,k_P)$ on $[0,\bar\lambda]\times\mathbb R_+$.

\emph{Step 2.}
We claim that \eqref{eq:DD_lambda_kP} implies
\begin{equation}\label{eq:SSO_Lambda}
\Lambda(k_P)\ \SSO\ \Lambda(k_P') \qquad\text{for }k_P'>k_P,
\end{equation}
where $\SSO$ is the strong set order on $[0,\bar\lambda]$ induced by the usual order $\ge$, i.e.
$A\SSO B$ iff for all $a\in A$, $b\in B$, $\max\{a,b\}\in A$ and $\min\{a,b\}\in B$.
Since $\Lambda(k_P)$ and $\Lambda(k_P')$ are compact, it is equivalent 
to show
\begin{equation}\label{eq:extremes}
\min\Lambda(k_P)\ \ge\ \min\Lambda(k_P') \qquad\text{and}\qquad
\max\Lambda(k_P)\ \ge\ \max\Lambda(k_P').
\end{equation}

Let $\underline\lambda:=\min\Lambda(k_P)$ and $\overline\lambda':=\max\Lambda(k_P')$.
Suppose for contradiction that $\underline\lambda<\overline\lambda'$. Since
$\underline\lambda\in\Lambda(k_P)$ and $\overline\lambda'\in\Lambda(k_P')$, we have
\begin{equation}\label{eq:opt1}
\tilde U(\underline\lambda,k_P)\ \ge\ \tilde U(\overline\lambda',k_P),
\qquad
\tilde U(\overline\lambda',k_P')\ \ge\ \tilde U(\underline\lambda,k_P').
\end{equation}
Adding the two inequalities in \eqref{eq:opt1} yields
\(
\tilde U(\overline\lambda',k_P')-\tilde U(\underline\lambda,k_P')
\ \ge\
\tilde U(\overline\lambda',k_P)-\tilde U(\underline\lambda,k_P).
\)
But $\overline\lambda'\ge \underline\lambda$ and $k_P'>k_P$, so \eqref{eq:DD_lambda_kP} implies the
reverse inequality:
\(
\tilde U(\overline\lambda',k_P')-\tilde U(\underline\lambda,k_P')
\ \le\
\tilde U(\overline\lambda',k_P)-\tilde U(\underline\lambda,k_P),
\)
hence both inequalities must hold with equality. In particular, \eqref{eq:opt1} must hold with
equality, so $\overline\lambda'$ is also a maximizer at $k_P$ and $\underline\lambda$ is also a
maximizer at $k_P'$, i.e.
\(
\overline\lambda'\in\Lambda(k_P)
\;\text{and}\;
\underline\lambda\in\Lambda(k_P').
\)
This contradicts the definitions $\underline\lambda=\min\Lambda(k_P)$ and
$\overline\lambda'=\max\Lambda(k_P')$ unless $\underline\lambda\ge \overline\lambda'$.
Therefore $\min\Lambda(k_P)\ge \max\Lambda(k_P')$, which in particular implies \eqref{eq:extremes}
(and hence \eqref{eq:SSO_Lambda}). (Equivalently, one can derive separately the two
inequalities in \eqref{eq:extremes} by repeating the same argument with
$\underline\lambda':=\min\Lambda(k_P')$ and $\overline\lambda:=\max\Lambda(k_P)$.) By \eqref{eq:Mc_Lambda} and monotonicity of the map $\lambda\mapsto\tau(\lambda)$ under $\cx$,
\eqref{eq:SSO_Lambda} implies
\(
M^{\mathcal C}_{k_V}(k_P)\ \SSO\ M^{\mathcal C}_{k_V}(k_P').
\)

\emph{Step 3.} Now pick any $\tau\in M^{\mathcal C}_{k_V}(k_P)$ and any $\tau'\in M^{\mathcal C}_{k_V}(k_P')$. We have
\(
\tau\meet\tau' \in M^{\mathcal C}_{k_V}(k_P')
\;\text{and}\;
\tau\join\tau' \in M^{\mathcal C}_{k_V}(k_P).
\)
Let $\hat\tau:=\tau\join\tau'$ and $\hat\tau':=\tau\meet\tau'$.
By the lattice property, $\tau\join\tau'\succeq_{cx}\tau\meet\tau'$, hence
$\hat\tau\succeq_{cx}\hat\tau'$, as desired.
\end{proof}

\begin{proof}[\textbf{\emph{Proof of Proposition \ref{prop:comp_subs_necessity}}}]

Fix $k_P$ and let
\(
M^{\mathcal C}_{k_P}(k_V)=\arg\max_{\tau\in\mathcal C}U(\tau,k_V).
\)
Fix $k_V'<k_V$ and define $\Delta\phi(p):=\Delta\phi_{k_V',k_V}(p)=\phi_{k_V'}(p)-\phi_{k_V}(p)$.
For all $\tau\in\mathcal C$,
\begin{equation}\label{eq:diff_id_pf}
U_{k_P}(\tau,k_V')-U_{k_P}(\tau,k_V)=\int_0^1 \Delta\phi(p)\,\tau(dp).
\end{equation}

We prove the equivalence for the complementarity direction; the substitutability direction is
identical with all inequalities reversed.

\medskip
\noindent\textbf{(ii)$\Rightarrow$(i).}
Assume (ii): for all $\tau'\cx\tau$ in $\mathcal C$,
\begin{equation}\label{eq:ID_pf}
\int \Delta\phi\,d\tau' \ \ge\ \int \Delta\phi\,d\tau.
\end{equation}
We show $M^{\mathcal C}_{k_P}(k_V')\SSO M^{\mathcal C}_{k_P}(k_V)$. Take arbitrary $\sigma'\in M^{\mathcal C}_{k_P}(k_V')$ and $\sigma\in M^{\mathcal C}_{k_P}(k_V)$. Since $\mathcal C$ is a chain, either
$\sigma'\cx\sigma$ or $\sigma\cx\sigma'$. If $\sigma'\cx\sigma$, then on a chain $\sigma'\vee\sigma=\sigma'$ and $\sigma'\wedge\sigma=\sigma$,
so $\sigma'\vee\sigma\in M^{\mathcal C}_{k_P}(k_V')$ and $\sigma'\wedge\sigma\in M^{\mathcal C}_{k_P}(k_V)$ hold immediately. If instead $\sigma\cx\sigma'$, then optimality implies
\begin{equation}\label{eq:opt_pair_pf}
U_{k_P}(\sigma,k_V)\ \ge\ U_{k_P}(\sigma',k_V),
\qquad
U_{k_P}(\sigma',k_V')\ \ge\ U_{k_P}(\sigma,k_V').
\end{equation}
Adding \eqref{eq:opt_pair_pf} and rearranging gives
\(
U_{k_P}(\sigma',k_V')-U_{k_P}(\sigma',k_V)\ \ge\ U_{k_P}(\sigma,k_V')-U_{k_P}(\sigma,k_V).
\)
Using \eqref{eq:diff_id_pf} yields
\(\int \Delta\phi\,d\sigma'\ \ge\ \int \Delta\phi\,d\sigma.\)
But $\sigma\cx\sigma'$ and \eqref{eq:ID_pf} together imply the reverse inequality
$\int \Delta\phi\,d\sigma \ge \int \Delta\phi\,d\sigma'$. Hence equality holds, so
\eqref{eq:opt_pair_pf} must hold with equality. In particular,
$U_{k_P}(\sigma',k_V)=U_{k_P}(\sigma,k_V)$ and $U_{k_P}(\sigma',k_V')=U_{k_P}(\sigma,k_V')$, so $\sigma'\in M^{\mathcal C}_{k_P}(k_V)$ and
$\sigma\in M^{\mathcal C}_{k_P}(k_V')$. Therefore
\(
\sigma'\vee\sigma=\sigma'\in M^{\mathcal C}_{k_P}(k_V')\;\text{and}\;
\sigma'\wedge\sigma=\sigma\in M^{\mathcal C}_{k_P}(k_V),
\)
as required. Since $\sigma'\in M^{\mathcal C}_{k_P}(k_V')$ and $\sigma\in M^{\mathcal C}_{k_P}(k_V)$ were arbitrary, we conclude
$M^{\mathcal C}_{k_P}(k_V')\SSO M^{\mathcal C}_{k_P}(k_V)$.

\medskip
\noindent\textbf{(i)$\Rightarrow$(ii).}
Assume (i): for all $k_V'<k_V$, $M(k_V')\SSO M(k_V)$. Fix any $\tau',\tau\in\mathcal C$ with
$\tau'\cx\tau$. We must show \eqref{eq:ID_pf}, i.e.\ $\int\Delta\phi_{k_V',k_V}\,d\tau'\ge\int\Delta\phi_{k_V',k_V}\,d\tau$.

By the richness assumption (R), there exist sequences $k_{V,n}\to k_V$ and $k_{V,n}'\to k_V'$
with $k_{V,n}'<k_{V,n}$ and selections
\(
\sigma_n\in M^{\mathcal C}_{k_P}(k_{V,n}),\; \sigma_n'\in M^{\mathcal C}_{k_P}(k_{V,n}')
\)
such that $\sigma_n\to\tau$ and $\sigma_n'\to\tau'$ in $\mathcal C$. For each $n$, since $k_{V,n}'<k_{V,n}$, assumption (i) gives
\(
M^{\mathcal C}_{k_P}(k_{V,n}')\ \SSO\ M^{\mathcal C}_{k_P}(k_{V,n}).
\)
Applying the definition of $\SSO$ to $\sigma_n'\in M^{\mathcal C}_{k_P}(k_{V,n}')$ and $\sigma_n\in M^{\mathcal C}_{k_P}(k_{V,n})$ yields
\(
\sigma_n'\vee\sigma_n\in M^{\mathcal C}_{k_P}(k_{V,n}'),\; \sigma_n'\wedge\sigma_n\in M^{\mathcal C}_{k_P}(k_{V,n}).
\)
Because $\mathcal C$ is a chain, either $\sigma_n'\cx\sigma_n$ or $\sigma_n\cx\sigma_n'$. If
$\sigma_n\cx\sigma_n'$, then $\sigma_n'\wedge\sigma_n=\sigma_n$ would belong to $M^{\mathcal C}_{k_P}(k_{V,n})$ and
$\sigma_n'\vee\sigma_n=\sigma_n'$ would belong to $M^{\mathcal C}_{k_P}(k_{V,n}')$, which forces $\sigma_n'\cx\sigma_n$
(or else the max/min membership would fail on a chain). Thus we may take
\(
\sigma_n'\cx \sigma_n \; \text{for all }n.
\)

Now use optimality at the parameter pair $(k_{V,n}',k_{V,n})$:
\(
U_{k_P}(\sigma_n,k_{V,n})\ge U_{k_P}(\sigma_n',k_{V,n}) \) and

\(U_{k_P}(\sigma_n',k_{V,n}')\ge U_{k_P}(\sigma_n,k_{V,n}').
\)
Adding and rearranging yields
\[
U_{k_P}(\sigma_n',k_{V,n}')-U_{k_P}(\sigma_n',k_{V,n})
\ge
U_{k_P}(\sigma_n,k_{V,n}')-U_{k_P}(\sigma_n,k_{V,n}),
\]
and by \eqref{eq:diff_id_pf} (with $(k_V',k_V)$ replaced by $(k_{V,n}',k_{V,n})$),
\begin{equation}\label{eq:ineq_n_pf}
\int \Delta\phi_{k_{V,n}',k_{V,n}}\,d\sigma_n'
\ \ge\
\int \Delta\phi_{k_{V,n}',k_{V,n}}\,d\sigma_n.
\end{equation}

Finally, let $n\to\infty$. Since $\sigma_n'\to\tau'$ and $\sigma_n\to\tau$ in $\mathcal C$ and
$k_{V,n}'\to k_V'$, $k_{V,n}\to k_V$, the integrals in \eqref{eq:ineq_n_pf} converge to
$\int \Delta\phi_{k_V',k_V}\,d\tau'$ and $\int \Delta\phi_{k_V',k_V}\,d\tau$, respectively (by the
continuity of $\phi_{k_V}(p)$ and dominated convergence on $[0,1]$). Hence
\(
\int \Delta\phi_{k_V',k_V}\,d\tau'
\ge
\int \Delta\phi_{k_V',k_V}\,d\tau,
\)
which is (ii). This completes the proof of equivalence.
\end{proof}

\begin{proof}[\textbf{\emph{Proof of Proposition \ref{thm:two_thresholds}.}}]
We will show the proof in steps:

\emph{Step 1. }
For fixed $(k_P,k_V)$, the outer problem is $\max_{\tau:\ \E_\tau[p]=p_0}\ \E_\tau[g_{k_P,k_V}(p)].$ In the binary-state case, standard concavification yields value $(\cav g_{k_P,k_V})(p_0)$ and an optimal $\tau^*$ supported on at most two posteriors. Moreover, pooling $\tau^*=\delta_{p_0}$ is optimal iff it achieves the value, i.e.\ $g_{k_P,k_V}(p_0)=(\cav g_{k_P,k_V})(p_0).$ Hence pooling is optimal iff the concavification gap $\Gamma(k_V):=(\cav g_{k_P,0})(p_0)-g_{k_P,0}(p_0)$ equals zero.

\emph{Step 2.}
Fix $k_V,k_V'\in[0,k]$ and $p\in[0,1]$. Let $a(p):=\Delta_u(p)$ and
$b(p):=C\big(q_{\min}(p)\big)$. By (A1), $b(p)\in[0,C(1)]$. Then
\(
\phi_{k_V}(p)-u(f)\cdot p=\max\{0,a(p)-k_V b(p)\},
\;
\phi_{k_V'}(p)-u(f)\cdot p=\max\{0,a(p)-k_V' b(p)\}.
\)
Using that $\max\{0,x\}$ is $1$--Lipschitz on $\R$, i.e.
\(
\big|\max\{0,x\}-\max\{0,y\}\big|\le |x-y|\,
\) for all \(x,y\in\R,\)
we obtain
\[
\begin{aligned}
|\phi_{k_V}(p)-\phi_{k_V'}(p)|
&=\big|\max\{0,a(p)-k_V b(p)\}-\max\{0,a(p)-k_V' b(p)\}\big|\\
&\le \big|(a(p)-k_V b(p))-(a(p)-k_V' b(p))\big|\\
&=|k_V-k_V'|\,b(p)
\le |k_V-k_V'|\,C(1).
\end{aligned}
\]
Taking the supremum over $p\in[0,1]$
yields
\(
\sup_{p\in[0,1]}|\phi_{k_V}(p)-\phi_{k_V'}(p)|
\le C(1)\,|k_V-k_V'|.
\)
Hence $k_V\mapsto \phi_{k_V}$ is Lipschitz in the uniform norm, and therefore uniformly
continuous.

\emph{Step 3.}
Since $\phi_{k_V}$ is uniformly continuous, by definition, $g_{k_P,k_V}$ depends continuously on $k_V$ in the uniform norm.
The concavification operator is continuous under uniform convergence on a compact interval:
if $g_n\to g$ uniformly on $[0,1]$, then $\cav g_n\to \cav g$ uniformly on $[0,1]$.
Therefore $(\cav g_{k_P,k_V})(p_0)$ is continuous in $k_V$, and so is $g_{k_P,k_V}(p_0)$ in $k_V$, hence $\Gamma(\cdot)$ is continuous.

\emph{Step 4.}
By (A2), $\Gamma(0)=0$. By (A3), $\Gamma(\bar k_V)>0$ for some $\bar k_V$, so the set
$\{k_V\ge 0:\Gamma(k_V)>0\}$ is nonempty, and $k_V^{\mathrm{ON}}(k_P)$ is well-defined and finite.
By definition of the infimum, if $k_V<k_V^{\mathrm{ON}}(k_P)$ then $\Gamma(k_V)=0$, hence pooling is optimal; this proves (i).

Let $\varepsilon>0$ be arbitrary. Suppose, for contradiction, that $\Gamma(k_V)=0$ for all $k_V\in\big(k_V^{\mathrm{ON}}(k_P),\,k_V^{\mathrm{ON}}(k_P)+\varepsilon\big)$.
Then there is \emph{no} $k_V$ in this interval with $\Gamma(k_V)>0$, which implies
\(
\inf\{k_V:\Gamma(k_V)>0\}\ \ge\ k_V^{\mathrm{ON}}(k_P)+\varepsilon,
\)
contradicting the definition of $k_V^{\mathrm{ON}}(k_P)$ as the infimum of the strict-positivity set.
Therefore, there must exist some
$k_V\in\big(k_V^{\mathrm{ON}}(k_P),\,k_V^{\mathrm{ON}}(k_P)+\varepsilon\big)$ with $\Gamma(k_V)>0$.
Given that $\Gamma(k_V)>0$ is equivalent to $(\cav g_{k_P,k_V})(p_0)>g_{k_P,k_V}(p_0)$, there exists an informative optimal information policy $\tau^*\neq \delta_{p_0}$ (Lemma~\ref{thm:concav}); this proves (ii).
(At $k_V=k_V^{\mathrm{ON}}(k_P)$, either pooling or informativeness may occur, depending on whether $\Gamma$ is zero or positive at the boundary.)

\emph{Step 5.}
Fix a posterior $p\in(\pi_L,\pi_H)$.
In the convex-hull cutoff formulation, to induce the agent to choose $g$ at $p$ the principal must choose at least $q_{\min}(p)$,
and the associated incremental net gain is $\Delta_u(p)-k_V C(q_{\min}(p))$.
Since any $q>q_{\min}(p)$ does not further change the action, the optimal management at $p$ is bang--bang:
\[
q^*(p;k_V)=
\begin{cases}
q_{\min}(p), & \Delta_u(p)>k_V C(q_{\min}(p)),\\
0, & \Delta_u(p)\le k_V C(q_{\min}(p)).
\end{cases}
\]
Therefore, if $k_V\ge \sup_{p\in(\pi_L,\pi_H)}\Delta_u(p)/C(q_{\min}(p))=:k_V^{NM}$, then
$\Delta_u(p)\le k_V C(q_{\min}(p))$ for all $p\in(\pi_L,\pi_H)$, implying $q^*(p;k_V)=0$ for all $p$.
Outside $(\pi_L,\pi_H)$, management is never used by construction (actions are unanimous), so (iii) follows.
Conversely, if $k_V<k_V^{NM}$, then by definition there exists some $p\in(\pi_L,\pi_H)$ such that
$\Delta_u(p)>k_V C(q_{\min}(p))$, hence $q^*(p;k_V)=q_{\min}(p)>0$, proving (iv).
\end{proof} 
\bibliographystyle{ecta} 
\bibliography{references}

\end{document}